\newtheorem{theorem}{Theorem}
\newtheorem{proposition}[theorem]{Proposition}
\newtheorem{lemma}[theorem]{Lemma}
\newtheorem{corollary}[theorem]{Corollary}
\newtheorem*{proposition*}{Proposition}
\newtheorem*{assumption*}{Assumption}
\title{
		\vspace{-1in} 	
		\usefont{OT1}{bch}{b}{n}
		\normalfont \normalsize \textsc{} \\ [25pt]
		\huge Chaotic lensed billiards
}
\date{}
\author{\normalfont \large 
\ Timothy Chumley\footnote{\scriptsize Department of Mathematics and Statistics, Mount Hoyoke College, 50 College St., South Hadley, MA 01075},
\ Maeve Covey\footnote{\scriptsize Department of Mathematics and Statistics, Washington University, Campus Box 1146, St. Louis, MO 63130},
\ Christopher Cox\footnotemark[1], 
\  Renato Feres\footnotemark[2]
}
 \date{\today}
\begin{document}

\maketitle

\begin{abstract}
\begin{center}
 Abstract \end{center}
{\small  
Lensed billiards are an extension of the notion of    billiard dynamical systems  obtained by adding a potential function of the form $C\mathbbm{1}_{\mathcal{A}}$, where
$C$ is a real-valued constant and $\mathbbm{1}_{\mathcal{A}}$ is the indicator function of an open subset $\mathcal{A}$  of the billiard table whose boundaries (of $\mathcal{A}$ and the table) are piecewise smooth. 
Trajectories are polygonal lines that undergo either  reflection or refraction at the boundary of $\mathcal{A}$ depending on the angle of incidence. Our main focus is to explore   how the dynamical properties  of these models depend on the potential parameter $C$ using a number of   families of examples.
In particular, we explore numerically the Lyapunov exponents for these parametric families and highlight  the more  salient common properties  that distinguish them from
standard billiard systems.   We further justify some of these properties by characterizing lensed billiards  in terms of switching dynamics  between two  open (standard) billiard subsystems  and
obtaining  mean values associated to orbit sojourn   in each subsystem. 
}
\end{abstract}

\section{Introduction}
Mathematical billiards, particularly in dimension $2$,  are    dynamical systems of a geometric nature that are relatively easy to define and, at the same time,  exhibit  a wide range of dynamic properties, making them useful models for more complicated systems. For this reason they have for many decades  figured prominently in the development of the modern theory of dynamical systems and ergodic theory.   Research on chaotic billiards in dimension $2$, in particular, has by now attained a high degree of technical sophistication. This, in our opinion, justifies an effort to look for generalizations or extensions of the concept of  billiard systems that can point to new directions of research without compromising too much on  the  qualities that make them  attractive  model systems.

\emph{Lensed billiards} may be defined as standard billiard systems to which are added piecewise constant mechanical potential functions of the form $V=C\mathbbm{1}_\mathcal{A}$ where $\mathbbm{1}_\mathcal{A}$ is the indicator function of a subset $\mathcal{A}$ of the billiard domain. We may think of $\mathcal{A}$ as a scatterer that reflects  billiard trajectories that collide with it at sufficiently shallow angles relative to a   {\em critical angle} to be defined shortly, but allows others to pass through and undergo a  {\em refraction}, similar to a light ray crossing the interface surface separating two optical media with different refractive indices.

  The central new  factor to account for   is the  effect of the potential parameter $C$\----the value of the potential function on $\mathcal{A}$\----on the dynamical  behavior of the system. After laying out  basic definitions and the most general properties of  lensed billiards,  the paper focuses on the numerical determination of Lyapunov exponents for a few  parametric families of examples, identifies a number of properties that appear to be common for  these systems,
  and proposes a framework of analysis for explaining these common features.

\begin{figure}[htbp]
\begin{center}
\includegraphics[width=3in]{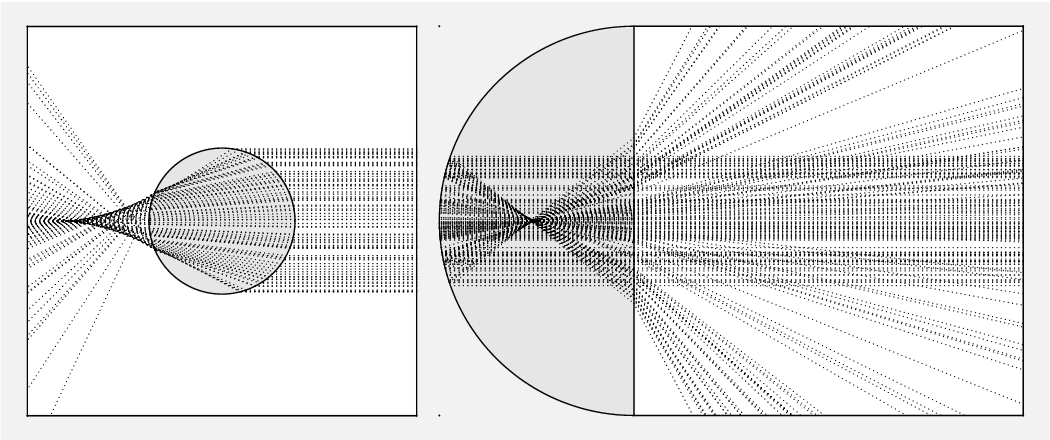}\ \ 
\caption{{\small On the left, a lensed Sinai-type billiard showing focusing and defocusing of trajectories.  On the right, a lensed Bunimovich-like billiard showing initially focusing followed by dispersing trajectories. In both cases, the potential is $-1$ on the shaded region and $0$ outside.  A  parallel  beam of trajectories emanates from the vertical right wall, stopping at the first collision with the boundary of the unshaded region of the billiard domain.}}
\label{SinaiBuni}
\end{center}
\end{figure} 

Before embarking on  a systematic study, let us get a first impression of what these systems look like.
Figure \ref{SinaiBuni} shows two lensed variants of the classical Sinai semi-dispersing and Bunimovich stadium billiards. The shaded regions, let us  denote them by $\mathcal{A}$, indicate where the potential function $V$ is non-zero. In both cases we have set the value of $C$   to $-1$. The negative potential causes trajectories going into $\mathcal{A}$ to deflect as if entering an optical medium with higher refractive index. (As a mechanical system, the speed of the point particle naturally  increases due to conservation of energy\----the opposite of what would happen
to a light ray.) Therefore the Sinai scatterer behaves as a focusing lens. The figure on the left shows the typical focusing-defocusing behavior produced by such a lens. On the other hand, 
trajectories leaving $\mathcal{A}$ on the Bunimovich-type system on the right-hand side exhibit dispersing behavior after standard focusing inside $\mathcal{A}$ by the circular part of the boundary.
This shows how the choice of values of $V$ can alter the standard  mechanisms leading to  hyperbolicity in billiard systems. If the value of the potential in $\mathcal{A}$ is positive then
these behaviors are reversed. It is easy to verify that 
if $C<0$, the system on the left is not ergodic due to trapped trajectories inside the circular lens, and
if $C>0$,   the system on the right is not ergodic due to the existence of a positive measure set of initial conditions initiating in the complement of $\mathcal{A}$ that cannot enter $\mathcal{A}$.  (We consider initial velocities with sufficient energy to allow the billiard particle to cross  the potential barrier, but when the angle of collision
with the vertical line of discontinuity of $V$ is sufficiently small, the particle is nevertheless forced to reflect.)

It is  important to keep in mind the distinction between  the billiard flow and  the billiard map dynamics, due to the fact that particle speed is no longer constant. Figure \ref{Buni_traj} shows the difference in somewhat extreme fashion. As the potential constant $C$ on the semi-disc $\mathcal{A}$   becomes negative with large $|C|$, the number of steps of the billiard trajectories inside $\mathcal{A}$ increases, roughly proportionally to $\sqrt{|C|}$ (a more precise statement will be given later), suggesting that trajectories of the billiard map become increasingly trapped in that  deep potential well. However, 
the speed of the particle in that region is also proportional to $\sqrt{|C|}$, so the amount of time the billiard flow remains in $\mathcal{A}$ is expected to remain bounded.

\begin{figure}[htbp]
\begin{center}
\includegraphics[width=4.5in]{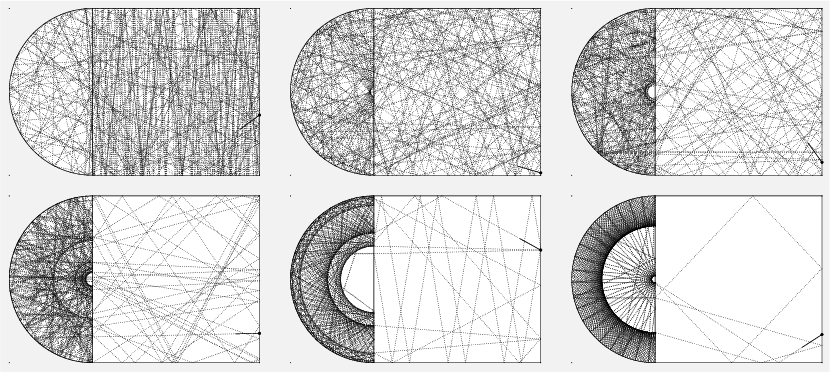}\ \ 
\caption{{\small Sample trajectories (of equal number of steps) for the Lensed Bunimovich billiard. The values $C$ of the potential on the semi-disc, from top to bottom, left to right:  $0$, $-1$, $-10$, $-100$, $-1000$, $-10000$.  Numerical approximation of the positive mean Lyapunov exponent (for the billiard flow) gave the values: $1.00\pm 0.02$ for $C=0$ (top-left) and
$4.46\pm 0.14$ for $C=-10000$ (bottom-right).  See also Figure \ref{Exper2_graph1}.}}
\label{Buni_traj}
\end{center}
\end{figure}

Naturally, the systems we are calling lensed billiards can be viewed as a special case of {\em soft billiards}, or more broadly Hamiltonian systems under the influence of a potential, for which there is a rich existing literature that we now discuss.
It should be emphasized that the present work takes on related systems from a number of new perspectives. A central focus of soft billiards has been for the case of smooth potentials and for the case of a Sinai billiard table consisting of the torus with circular scatterers. Here, the discontinuous potentials of lensed billiards result in impulse-like forces at discrete times and our examples venture beyond the case of the soft Sinai billiard. There are a number of early examples in the literature whose focus is on characterizing ergodicity under appropriate conditions on a smooth potential in the soft Sinai billiard. In \cite{MR0433510}, the author shows that for certain bell-shaped potentials under a general smoothness condition ($C^3$ smoothness at the boundary of the scatterer, among other regularity conditions on the geometry of the table and the total energy of the system), the soft Sinai billiard has the $K$-property and is thus ergodic. Later, in \cite{MR0885572}, the author shows for a class of smooth Coulombic potentials, again for the Sinai billiard on the torus with disk-shaped scatterers, that the flow of the system can be realized as a complete geodesic flow on a suitably defined compact Riemannian manifold. Under appropriate conditions on the potential, the metric has negative curvature and so the flow is Anosov and hence ergodic. In \cite{MR1087385}, these results are generalized to a broader class of smooth potentials. Moreover, under certain conditions on the potential, positivity of Lyapunov exponents is shown using Wojtkowski's method of invariant cone fields. A number of further studies have continued to extend the literature in the case of smooth potentials for the Sinai billiard: exponential decay of correlations and the central limit theorem \cite{MR2020220}, non-ergodicity and stability of periodic orbits \cite{MR1722985, MR1997267}, and related questions for tables in higher dimensions \cite{MR2191384, MR2304468}. It should be noted that the literature for the case of constant potentials seems to be more limited and confined to the case where the table is the Sinai billiard. In \cite{Bal} ergodicity is studied numerically and the parameter space, indexed by the scatterer radius and potential magnitude, is delimited into regions of non-ergodicity. Lyapunov exponents \cite{MR1012710} and ergodicity \cite{MR1162355} have been studied analytically for this example as well. 
Lensed billiards  are also  closely related to so-called {\em composite billiards}, also known as {\em ray-splitting} or {\em branching billiards} \cite{bynk,jssc}, but are  simpler and less of a departure from   standard systems. 

This paper continues, from a new perspective, the line of investigation of the above mentioned papers.  Our work is also motivated by more geometric considerations. The dynamics of geodesic flows on manifolds with discontinuous Riemannian metrics is a potentially interesting topic that, to our knowledge, is still waiting for a detailed study. When the metric tensor $g$ is in the conformal class of a smooth metric $g_0$, that is,  $g=\eta^2 g_0$ for   a positive function $\eta$  with discontinuities on smooth hypersurfaces,  we have a system very similar to  our lensed billiard systems.  For more general discontinuous metrics, Snell's law still holds as shown in \cite{GG}. It is hoped that some of the observations about the dynamics of lensed billiards in dimension $2$ will serve as a guide into the dynamics of such geodesic flows. (Having   this Riemannian setting in mind, some of the details relegated to appendices are given in greater generality than needed for the narrower purposes of the present paper.)

The rest of the paper is organized as follows.
Lensed billiard systems are introduced precisely in Section \ref{lensedbilliards}. In defining their phase space, a certain care is needed when accounting for refractions. While the systems we study can be considered as a special case of soft billiards with discontinuous potentials, the perspective we take in defining the billiard map has a number of benefits related to the issue of handling refractions and the differential of the billiard map when refractions occur.
Just as for ordinary billiard systems, lensed billiards are dynamical systems with singularities. Besides those singularities typically present in ordinary billiards (hitting a corner or grazing trajectories),
 we should add singular sets caused by the {\em critical angle} of incidence, a quantity to be defined formally in Subsection \ref{lb} that specifies the transition between  reflected and refracted trajectories.
Subsection \ref{lb} defines the billiard map of lensed billiards, introduces the notion of the critical angle, and demonstrates the singular sets and phase portraits for a collection of examples to be studied later in the paper.
In Subsection \ref{switching} we make a few  observations concerning refractions based on the invariance of the  Liouville measure  under the lensed billiard map. As a first step in the analysis of lensed billiard dynamics, we describe these systems as a switching process involving  two ordinary
open billiard systems and obtain mean values for time and number of collisions during sojourns in each subsystem.   The idea of  switching dynamics and the statistics of {\em sojourn times} discussed in Section \ref{switching} closely links the analysis of lensed billiards with the study of open (standard) billiard systems. (See Figure \ref{density_sojourn} and the remarks made around it.)
In Subsection \ref{R2} we describe the differential of the lensed billiard map. This is used afterwards to obtain numerically Lyapunov exponents for the examples of parametric families of billiard systems discussed in Section \ref{observations}, which contains the main results of the paper. The central interest  is to explore the Lyapunov exponent's dependence on the potential parameter $C$.  
 We identify a few properties of lensed billiards that recur in the examples and use the results of Subsection \ref{switching}  to explain some of them.
 
 The authors wish to thank Zhenyi Zhang for pointing out several corrections to an early draft of the paper.

\section{Preliminary details and facts}\label{lensedbilliards}
 
\subsection{Definition of lensed billiards}\label{lb}
Let  $\mathcal{R} \subseteq \mathbb{R}^2$ be a closed set
 with piecewise smooth boundary and $\mathcal{A}$  an open subset of
$\mathcal{R}$ whose boundary is also piecewise smooth.  Part of the boundary of $\mathcal{A}$ may be contained in the boundary of $\mathcal{R}$. Let the potential function be $V=C\mathbbm{1}_\mathcal{A}$ for a given constant $C \in \mathbb{R}$. The {\em lensed billiard  map} (and  billiard flow) is defined on the phase space of the system, which involves a small modification of the standard billiard phase space definition.
 \begin{figure}[htbp]
\begin{center}
\includegraphics[width=2.5in]{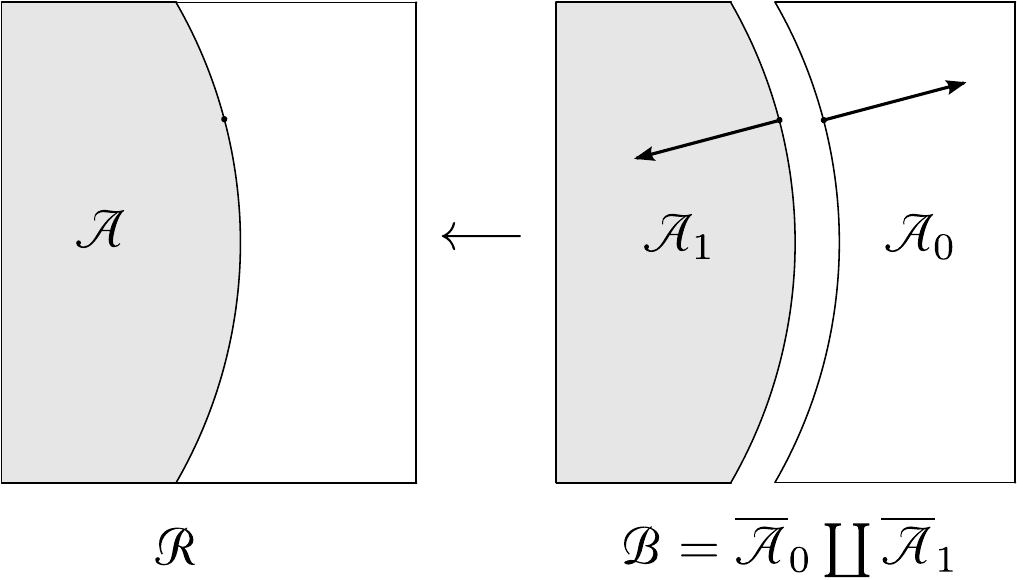}\ \ 
\caption{{\small In the  lensed billiard phase space the  refracting boundary is duplicated. }}
\label{shape2}
\end{center}
\end{figure}  

 Let  $\mathcal{A}_1:=\mathcal{A}$ and $\mathcal{A}_0:=\mathcal{R}\setminus\overline{\mathcal{A}}$ and consider the disjoint union $\mathcal{B}:=\overline{\mathcal{A}}_0\coprod\overline{\mathcal{A}}_1$.  We fix a value $E>C$ for the total energy (kinetic plus potential). By a {\em regular point} in the boundary  $\partial \mathcal{B}$ we mean
 a point $x$ at which the unit vector $\mathbf{n}_x$ perpendicular to  $\partial \mathcal{B}$ and pointing towards the interior of $\mathcal{B}$ at $x$ is defined. Note that 
 a piece of boundary common to both $\mathcal{A}_0$ and $\mathcal{A}_1$ is accounted for twice.  These  hypersurfaces of discontinuity of the potential
 function will have a copy as part of the boundary of $\mathcal{A}_0$ and another as part of the boundary of $\mathcal{A}_1$. Since these are sets where refracting is possible, we say that they are  contained in the {\em refracting boundary}. The others that are not duplicated are part of the {\em reflecting boundary}.

 The phase space $\mathcal{V}$ of the billiard map is the set of pairs $(x,v)$ where $x$ is a regular point in the boundary of $\mathcal{B}$ and $v$ is a tangent vector at $x$
 such that $\langle \mathbf{n}_x, v\rangle \geq 0$ and $$|v|=\begin{cases}\sqrt{2E/m} & \text{ if } x\in \overline{\mathcal{A}}_0\\
 \sqrt{2(E-C)/m} & \text{ if } x\in \overline{\mathcal{A}}_1.
 \end{cases}
 $$
 We define the billiard map $$\mathcal{T}(x,v)=(X(x,v), V(x,v)), \ \ (x,v)\in \mathcal{V},$$  as follows.  Suppose $x\in \overline{\mathcal{A}}_i$. Let $t_0:=\inf\{t>0: x+tv\in \partial\mathcal{B}\}.$
 If   $y=\gamma(t_0)$ is not a regular boundary point,
 the billiard map is not defined. If it is regular, let $y_j$, $j=0,1$, be the corresponding point in $\overline{\mathcal{A}}_j$. Set  $\tilde{v}:=\gamma'(t_0)$. We now determine whether the new velocity should be defined by a reflection or a refraction of $\tilde{v}$. 
 We say that 
 the angle $\theta$ between
 $\tilde{v}$ and $\mathbf{n}_{\gamma(t_0)}$ is {\em critical} if the normal component $\tilde{v}_n$ of $\tilde{v}$ satisfies 
 $$ \left(\tilde{v}_n\right)^2=\frac2m(C_j-C_i).$$
 Equivalently, this happens for the angle $\theta_{\text{\tiny crit}}$ that satisfies:
 \begin{equation}\label{Critical}\sin \theta_{\text{\tiny crit}}=\sqrt{\frac{E-C_j}{E-C_i}}. \end{equation}
 Note that a critical angle of incidence is only possible when $C_i<C_j$; that is, when approaching a region of higher potential from a region with lower potential. The state corresponding to a critical angle of approach is  then included in the singular set of the billiard map and the trajectory is terminated.
 If the velocity vector undergoes a reflection, we set $X(x,v)=y_i$ (recall  $x\in \mathcal{A}_i$);  if it undergoes a refraction, then $X(x,v)=y_j$, $j\neq i$.
 
  \begin{figure}[htbp]
\begin{center}
\includegraphics[width=2.0in]{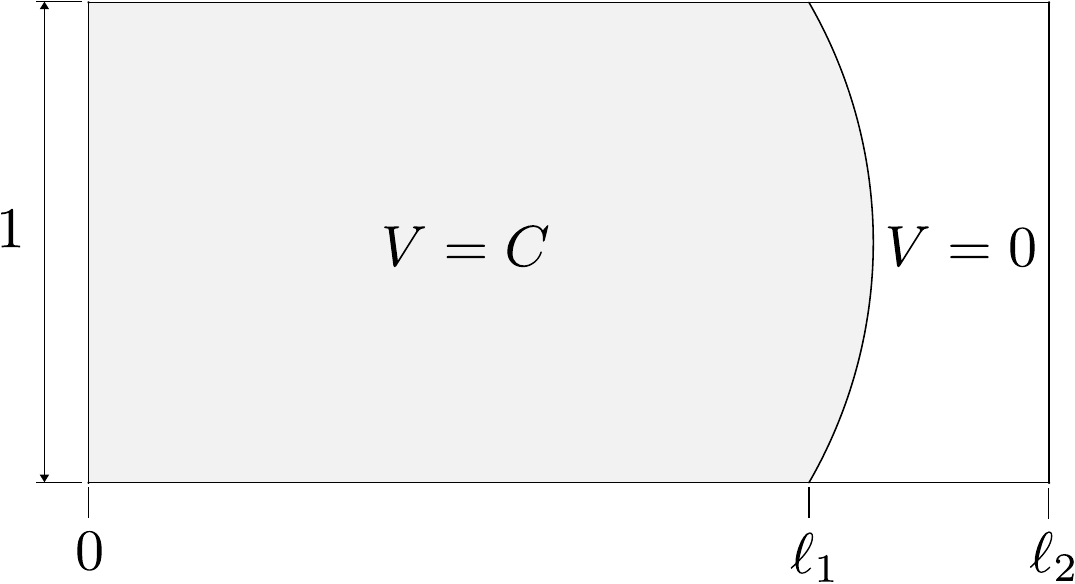}\ \ 
\caption{{\small This lensed billiard table involves $4$ parameters:  $\ell_1, \ell_2$, the signed curvature $\kappa\in [-2,2]$ of the
interface segment, and the value $C$ of the potential function to the left of the interface segment. To its right the potential is $0$. The figure shows 
one example for $\kappa<0$. Initial conditions for billiard trajectories are set at the vertical wall on the right. If $C$ is sufficiently large ($C\geq 1$ if the initial particle speed is set
equal to $\sqrt{2}$) for the system to be a standard billiard on the region to the right of the interface segment, then $\kappa<0$ corresponds to a semidispersing billiard and
$\kappa>0$ to a focusing billiard. }}
\label{shape1}
\end{center}
\end{figure}  

 In order to define $V(x,v)$, let  $\tilde{v}=\tilde{v}_\tau+\tilde{v}_n\mathbf{n}_{\gamma(t_0)}$
 be the orthogonal decomposition of $\tilde{v}$ at $\gamma(t_0)$. 
 The tangential component of the new velocity  in both cases is
 $$V_\tau(x,v) = \tilde{v}_\tau $$
 and the normal component is
 $$
  V_n(x,v) =
 \begin{cases}
-\tilde{v}_n & \text{ if } \left(\tilde{v}_n\right)^2-\frac2m (C_j-C_i)<0\ \ \ \ (\text{reflection})\\[0.5em]
 \sqrt{\left(\tilde{v}_n\right)^2 -\frac{2(C_j-C_i)}{m}}& \text{ if } \left(\tilde{v}_n\right)^2-\frac2m (C_j-C_i)>0\ \ \ \ (\text{refraction})\\[0.5em]
 0 &\text{ if } \text{ incidence angle is critical}. 
 \end{cases}  $$

\begin{figure}[htbp]
\begin{center}
\includegraphics[width=5.3in]{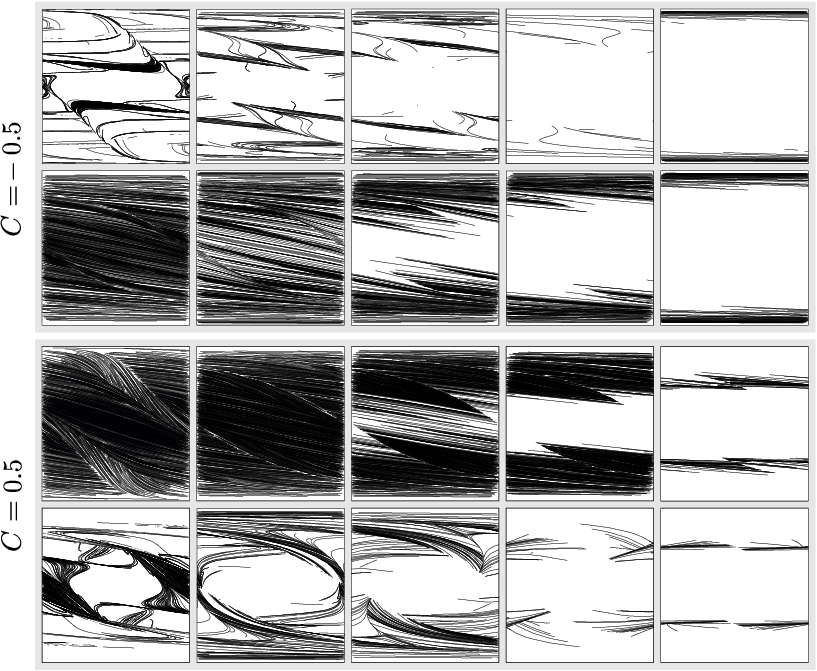}\ \ 
\caption{{\small Singular set  for the  family of Figure \ref{shape1}, restricted to the part of the phase space over the right vertical side of the rectangle.
 The parameters are $\ell_1=0.5$,
$\ell_2=1$, $C=-0.5$ (top group of ten plots), $C=0.5$ (bottom group); in each group of plots, the values of $\kappa$ are, clockwise from the top left: $-2.0$, $-1.5$, $-1.0$,
$-0.5$, $-0.1$, $0.1$, $0.5$, $1.0$, $1.5$, $2.0$. Only the singular initial conditions leading to the critical refracting angle are shown, for orbits of length 
less than $16$.}}
\label{SingKC}
\end{center}
\end{figure}

 This rule for obtaining $V_n$ is easily justified, for example, by introducing a smooth potential transition over a narrow tubular neighborhood of the line of discontinuity of width $\epsilon>0$, then taking the limit of the solution to Newton's equation of motion as  $\epsilon$ approaches $0$. We relegate the details (of a more general fact) to  Appendix  \ref{Ap_A}.  
 
It would be possible, under a convexity assumption, to define the billiard map when the angle of incidence is critical, but we choose to exclude states leading to such angles from the
domain of definition of the billiard map.  States in $\mathcal{V}$ whose orbits generated by  the map $\mathcal{T}$, after a finite number of steps, lead to a critical 
 angle or to a non-regular point  of $\partial \mathcal{B}$, or to tangential contact, will be called {\em singular}. 
The billiard table in Figure \ref{shape1} will be used as a multiparameter  family of examples at a number of places. For this example, 
Figure \ref{SingKC} illustrates   the  points in the singular set associated to critical angles. These are the new critical points not present in standard billiards.

  \begin{figure}[htbp]
\begin{center}
\includegraphics[width=4.9in]{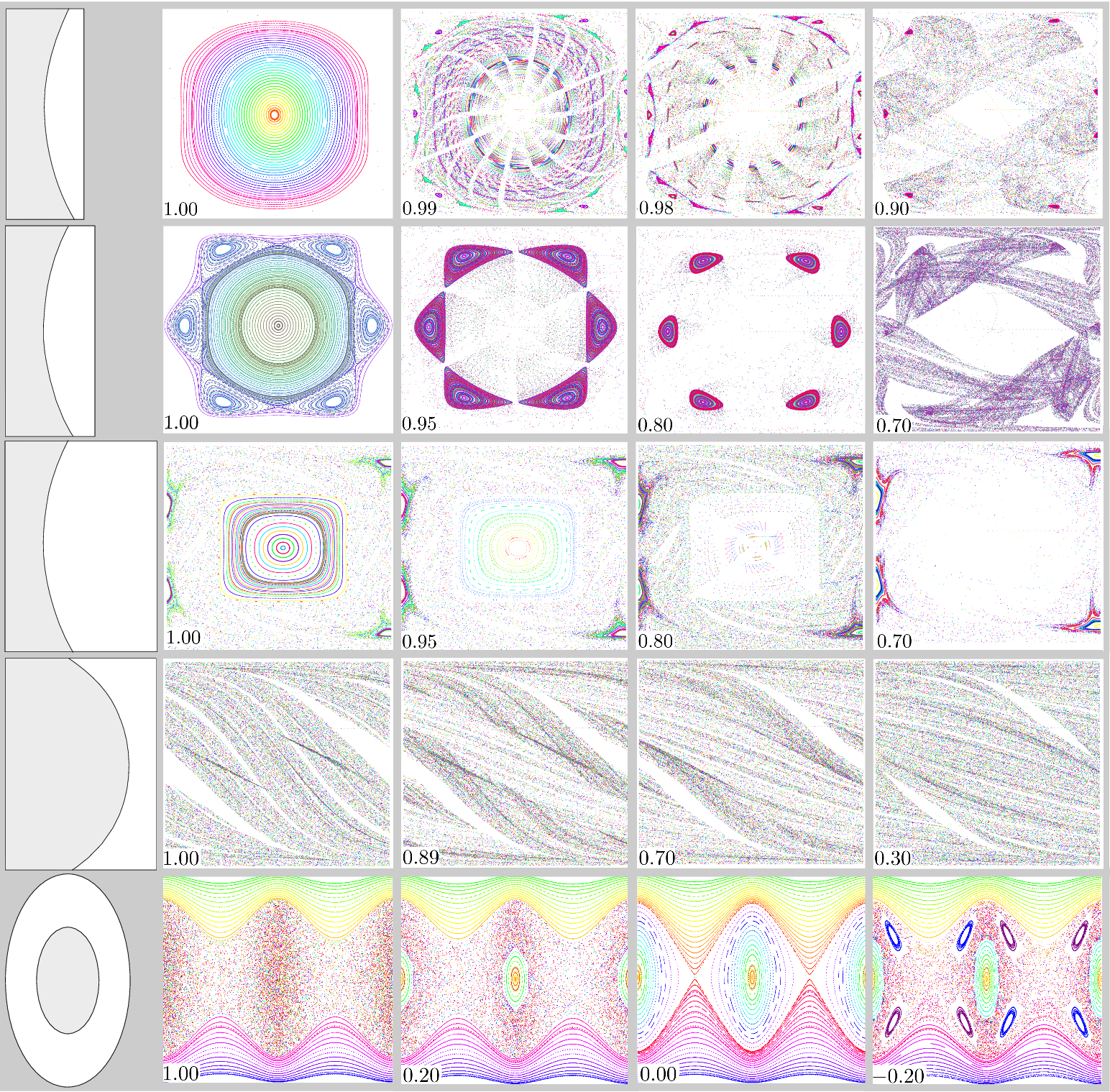}\ \ 
\caption{{\small Phase-portraits for bounded lensed billiard systems. Numbers indicate values of the potential constant. The orbits shown here are  for the return map to the right vertical side of the rectangular domain, for the top four systems, and the return map to the outer ellipse for the bottom system, rather than $\mathcal{T}$ itself. Only the relevant part of the phase space is shown.}}
\label{phase_portrait}
\end{center}
\end{figure}  
The figure only shows the part of the phase space over the vertical segments on the right end of the rectangular domain. Further details are given in the figure's caption.
Figure \ref{phase_portrait} shows a few phase portraits of   examples taken from the family of Figure \ref{shape1}. The plots only show orbits of the return map to the right vertical
side of the rectangular domain.

 It will be assumed  (as it is  often done for standard billiard systems) that the domain of $\mathcal{T}$ in $\mathcal{V}$ is 
an open set of  full Lebesgue measure and that $\mathcal{T}$ is a smooth map there. This  will be valid for the examples in this paper.

\subsection{Switching dynamics and measure invariance}\label{switching}
A prominent aspect of the dynamics of lensed billiards is the back-and-forth switching process between two standard billiard systems in
$\overline{\mathcal{A}}_0$ and $\overline{\mathcal{A}}_1$.  Segments of trajectories   from the moment of arrival in region $\overline{\mathcal{A}}_i$ to the moment of the
next switch back into the other region will be called  {\em sojourns}. The statistics of time and number of collisions during a sojourn   plays a role in
the analysis of lensed billiards, as will be seen. We give in this section the mean value of these two quantities under the assumption that the two standard  billiard subsystems are ergodic.

It is useful throughout to keep in mind the mechanical counterpart of {\em Snell's law.} Recall from the definition of the refraction map given earlier that, upon refraction at a point $x$ of discontinuity of the potential, the  components of the incident and transmitted velocities tangent to the line of discontinuity are equal.  If the trajectory leaves $\mathcal{A}_i$ and enters $\mathcal{A}_j$, $i\neq j$, and if $\theta_i$ and $\theta_j$ are the angles of the velocities $v_i$ and $v_j$ pre- and post-refraction, respectively, relative to the unit normal vector $\mathbf{n}_x$ to the line of discontinuity pointing into $\mathcal{A}_j$, then equality of tangential velocity components amounts to
$|v_i|\sin\theta_i =|v_j|\sin\theta_j.$ It follows that
$$ \frac{\sin\theta_i}{\sin\theta_j}=\frac{|v_j|}{|v_i|}=\sqrt{\frac{\frac12m|v_j|^2}{\frac12m|v_i|^2}}=\sqrt{\frac{E-C_j}{E-C_i}}.$$
This is Snell's law, in which terms of the form $\sqrt{E-C}$ assume the role of the refractive index in geometric optics.

A few remarks will be needed concerning invariance of the canonical billiard measure. If $V$ is the potential function on $\mathcal{R}$ and $\mathcal{V}$ is the billiard phase space, let $\mathcal{V}_E$ consist of the points $(x,v)$ in $\mathcal{V}$ such that $|v|=\sqrt{2(E-V(x))/m}$. This
is the invariant subset of states with total energy $E$.  Denoting by $\theta$ the angle between $\mathbf{n}_x$ and $v$, and by $s$ the arclength parameter along boundary line segments, the
{\em canonical billiard measure}, or {\em Liouville measure}, is the measure $\nu$ on $\mathcal{V}_E$ such that
\begin{equation}\label{Liouville}d\nu(s,\theta) =\sqrt{2(E-V(x))}\cos\theta\, d\theta\, ds.\end{equation}
It is well-known that this measure is preserved by the billiard map. That the lensed billiard map also preserves this measure is briefly justified in Appendix \ref{Ap_B}.  We refer to the angle distribution in $\nu$ as the {\em cosine law}.

We note the following effect of the square root term appearing in Equation (\ref{Liouville}).  The invariant measure on the part of the lensed billiard phase space
corresponding to an interface segment  between regions with potentials $C_0$ and $C_1$ is proportional to $\sqrt{E-C_0}$ on the side
of potential value $C_0$ and $\sqrt{E-C_1}$ on the other side. Thus in the long run, for an ergodic  system, the ratio of probabilities of transitions 
into $C_0$  (from either $C_0$ or $C_1$)
  over transitions into $C_1$ is
$$p_{01}=\sqrt{\frac{E-C_0}{E-C_1}}. $$

Let us illustrate this point   using  the lensed variant of the Bunimovich  stadium  shown in Figure \ref{BuniMeasure}, which may be expected to be ergodic, although we do not
verify it here.

\begin{figure}[htbp]
\begin{center}
\includegraphics[width=2in]{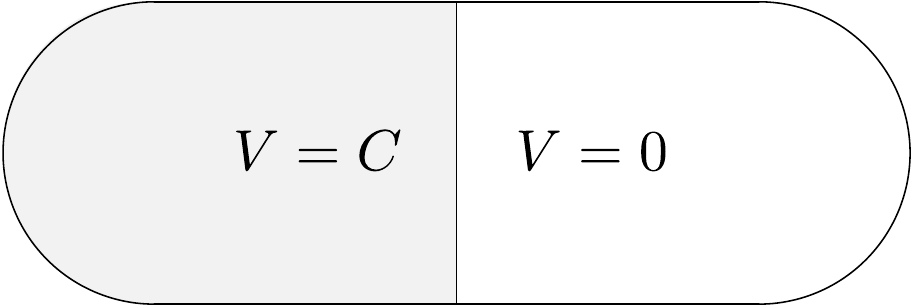}\ \ 
\caption{{\small   Billiard used to illustrate crossing probabilities.}}
\label{BuniMeasure}
\end{center}
\end{figure} 

Let us focus on the vertical interface line in the middle of the table. The probability that the particle, upon reaching the vertical line (from the left or from the right),
will continue towards the right by either reflection or refraction
   ($p_+$) over the probability of continuing  towards the left ($p_-$) should be (taking mass $m=1$ and energy $E=1$)
$$\frac{p+}{p_-}= \frac{1}{\sqrt{1-C}}.$$ 
In particular, when $C$ is negative and $|C|$ is very large, this probability is close to $0$. This means that once the particle falls into the region of negative potential,
most ``collisions'' with the vertical line correspond to reflections on the left side. The opposite holds when $C$ lies between $0$ and $1$ and close to $1$.
As an illustration, the following table gives predicted values $p_+/p_-$ and values obtained by numerical simulation for a randomly chosen orbit of $500000$ steps. 
$$ 
\begin{array}{c|c|c}C & \text{approx. ratio} & \frac{p_+}{p_-} \\\hline 1/3 & 1.222 & 1.225 \\1/2 & 1.418 & 1.414 \\8/9 & 2.998 & 3.000 \\-8 & 0.336 & 0.333\end{array}
$$
Comparing two boundary segments in a region with same value for the potential, 
the ratio of the number of crossings of the vertical segment from right to left over the number of collisions with the horizontal  segment of the same length in the region where
the potential is $C$ (chosen equal to $-1$ in the simulation, for a trajectory of $500000$ steps),  was $0.987$,  to be compared with the exact value $1$.

Another manifestation of measure invariance is recorded in the following proposition. (See Appendix \ref{Ap_B} for further remarks.)
\begin{proposition}\label{crosscos}
Suppose $C_0>C_1$. Let $x\in\mathcal{C}:= \overline{\mathcal{A}}_0\cap\overline{\mathcal{A}}_1$ be a regular point and $\mathbf{n}_x$ the unit normal vector to $\mathcal{C}$ at $x$ pointing into $\mathcal{A}_0$.   Further suppose that the velocity  of trajectories incident upon $\mathcal{C}$ at $x$ coming from  $\mathcal{A}_1$ is distributed according to the cosine law. Then the  distribution of post-crossing velocities, conditional on the occurrence of  refraction, also satisfies the cosine law. 
\end{proposition}

A standard application of the Ergodic Theorem (see Appendix \ref{Ap_C} for more details) gives the next proposition, for which we use the following notation. Let $\mathcal{A}$ be either $\mathcal{A}_0$ or $\mathcal{A}_1$. Let $A$ and $L$ be the area and the boundary length of $\mathcal{A}$, respectively, and $\mathcal{C}:= \overline{\mathcal{A}}_0\cap\overline{\mathcal{A}}_1$. Denote by $\ell$ the length of $\mathcal{C}$, by $L$ the length of the boundary of $\mathcal{A}$ and by $A$ the area of $\mathcal{A}$. Let $\mathcal{V}$ be the space of pairs $(x,v)$ where $x\in \partial \overline{\mathcal{A}}$ and $v$ is a  tangent vector  to $\overline{\mathcal{A}}$ at $x$ pointing into $\mathcal{A}$ and having norm  $\mathcal{s}$. Similarly, we denote by $\mathcal{E}$ the space of pairs $(x,v)$ where  now $x\in \mathcal{C}$. 
 At each $(x,v)\in \mathcal{E}$, let  $T(x,v)$ and $N(x,v)$  denote, respectively,  the time of first return to $\mathcal{E}$ and the number of collisions  with the boundary of $\overline{\mathcal{A}}$
  of a billiard trajectory with initial state $(x,v)$
  before returning to $\mathcal{E}$. For each $(x,v)\in \mathcal{V}$, let $\tau(x,v)$ denote the time duration of free flight from $(x,v)$  to the point of next collision. This is
 naturally the length of the free flight divided by the speed $\mathcal{s}$. Finally, denoting by $\nu$ and $\nu_\mathcal{E}$ the normalized Liouville measure on $\mathcal{V}$ and
 $\mathcal{E}$, respectively, we introduce the mean values
$$
 \langle N\rangle_{\mathcal{E}} :=\int_{\mathcal{E}} N(x,v)\, d\nu(x,v), \ \  
 \langle T\rangle_{\mathcal{E}}:=\int_{\mathcal{E}} T(x,v)\, d\nu(x,v), \ \  
 \langle \tau\rangle_{\mathcal{V}} :=\int_{\mathcal{V}} \tau(x,v)\, d\nu(x,v).
$$

The following is a special case of Theorem \ref{averages} proved in Appendix \ref{Ap_C}.
\begin{proposition}\label{averages_2} With the notations just introduced and 
under the assumption that the standard billiard map in $\overline{\mathcal{A}}$ is ergodic, the following relations hold:
\begin{enumerate}
\item $\langle N\rangle_{\mathcal{E}} = \frac{L}\ell$;
\item $\langle T\rangle_{\mathcal{E}}= \frac{\pi A}{\ell \, \mathcal{s}}$;
\item $\langle \tau\rangle_{\mathcal{V}}=  \frac{\pi A}{L\mathcal{s}};$
\item $\langle T\rangle_{\mathcal{E}}=\langle N\rangle_{\mathcal{E}}\langle \tau\rangle_{\mathcal{V}}.$
\end{enumerate}
Furthermore, if $\mathcal{E}_0\subseteq \mathcal{E}$ consists of pairs $(x,v)$ such that $x\in \mathcal{C}$ and the angle $\theta$ which $v$ makes with the normal $\mathbf{n}_x$ satisfies
$|\sin\theta|<r_0$ then the mean  number of returns to $\mathcal{E}$ before a first return to $\mathcal{E}_0$ is
$1/r_0$.
\end{proposition}

Let us assume for concreteness that $C_0>C_1$. Transitions from $\mathcal{A}_0$ to $\mathcal{A}_1$ happen whenever the trajectory reaches the separating hypersurface, at any angle of incidence. On the other hand, transitions from $\mathcal{A}_1$ to $\mathcal{A}_0$ can only occur when the angle of incidence between the velocity vector and the normal to the 
separation line (pointing into $\mathcal{A}_0$) is smaller than the critical angle.  Denoting by $\theta$ the angle of incidence,  a transition from $\mathcal{A}_1$ to $\mathcal{A}_0$ happens when
$|\sin\theta|<r_0:=\sqrt{\frac{E-C_0}{E-C_1}}$.    (See Equation (\ref{Critical}).)
Furthermore, by Snell's law, the angle $\varphi$ that the trajectory makes with the same normal vector as it crosses into $\mathcal{A}_0$ satisfies
\begin{equation}\label{phi_theta} \sin\varphi = r_0^{-1}{\sin\theta}.\end{equation}

\begin{corollary}\label{sojourn stat}
Suppose the value of the potential function in $\mathcal{A}_i$ is $C_i$ with $C_0>C_1$ and that the standard billiard system in $\overline{\mathcal{A}}_i$
is ergodic for $i=0$ and $1$.  Let $\langle T_i\rangle$ and $\langle N_i\rangle$ denote the mean time and number of collisions of the lensed billiard system during a sojourn 
in $\mathcal{A}_i$ before the next switch to the other region.
Then
$$\frac{\langle T_0\rangle}{\langle T_1\rangle}=\frac{A_0}{A_1}, \ \ \frac{\langle N_0\rangle}{\langle N_1\rangle}=\frac{L_0}{L_1}r_0, $$
where $A_i$ and $L_i$ are the area  of $\mathcal{A}_i$ and the length of the boundary of $\mathcal{A}_i$.
\end{corollary}

\begin{figure}[htbp]
\begin{center}
\includegraphics[width=3.5in]{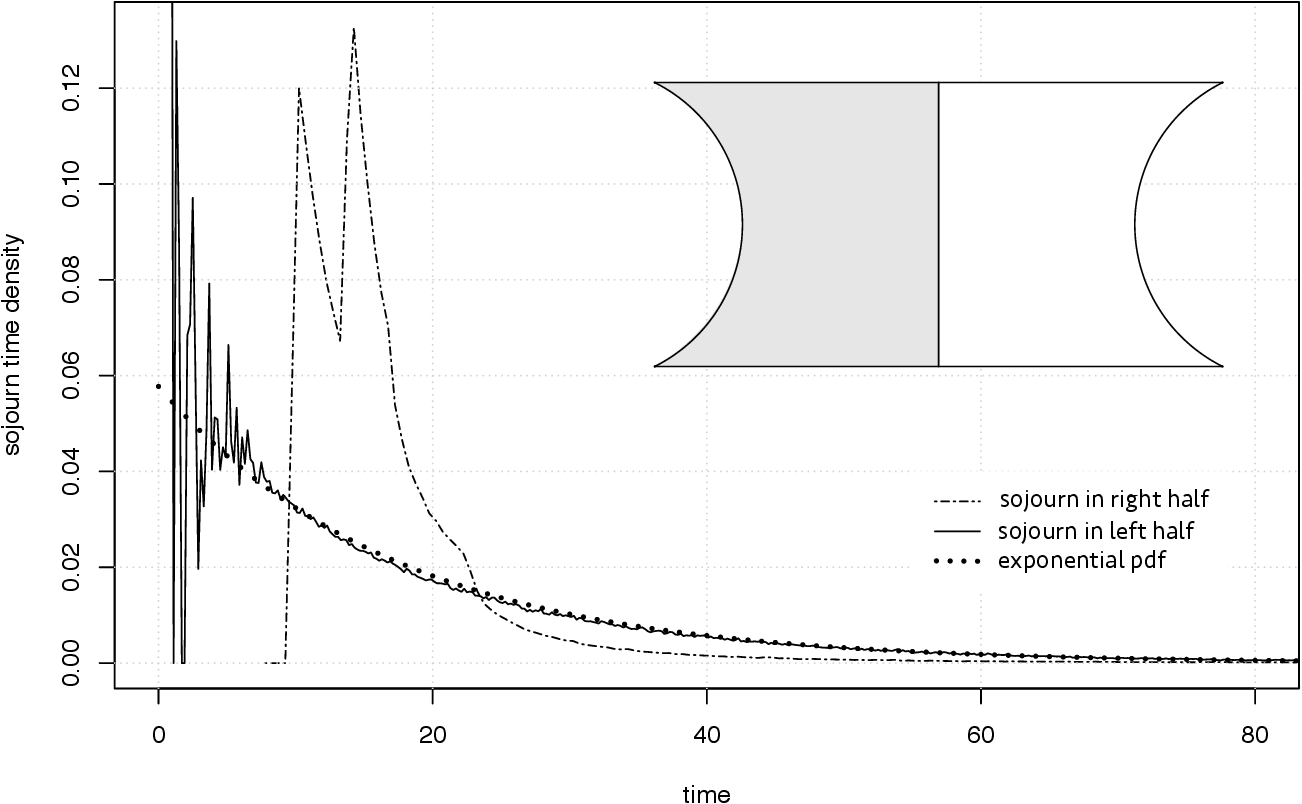}\ \ 
\caption{{\small  Sojourn time distributions obtained by simulating long trajectories of the lensed billiard system for the table shown in the inset. The potential is $-1$ on the left half of the table   and $0$ on the right, while particle energy is $0.01$. The particle undergoes many more collisions in the left than the right region, although   mean sojourn times are the same in  both. Theorem \ref{averages} gives the mean value $\approx 17.3054$ (the table has area $\approx 1.558$ and height $1$);  simulated values are $17.317$ (left) and $17.305$ (right).  The dotted line is the graph of the probability density function of the exponential distribution with parameter $\lambda=1/17.305=0.058$.}} 
\label{density_sojourn}
\end{center}
\end{figure}

Further  information about the probability distribution of sojourn times, beyond simply their mean values, is a focus of future study. 
For an example of what can be expected, consider
Figure \ref{density_sojourn}.  It shows the sojourn time distributions for the lensed billiard system whose billiard domain is shown in the figure inset. 
The left half of the table has potential $-1$ and the right half has potential $0$, while the total energy of the billiard particle is a small positive value. Thus the particle
undergoes many more collisions in a typical sojourn in the left-hand region. In order to cross back into the right-hand region, the billiard trajectory has to pass through a window
in phase space defined by the vertical middle line and a small angle interval. 
 It is interesting to note, in particular, the tail behavior for  the sojourn time distribution in the left  half, which is well approximated by the probability density of an exponential random
 variable with parameter equal to the reciprocal of the mean sojourn time.

It is likely that methods for the study of open billiards such as in \cite{BuniYao} and \cite{PeneSau} can be used for a detailed analysis of sojourn time distribution in potential wells. 
In lensed billiards  we are presented with two ordinary  billiard systems that are open to each other in the sense that the phase space of each  contains a subset, or hole, through which the billiard particle can
cross into the other. The size of the hole is smaller for the region with smaller potential. One should expect the successive hitting times, appropriately normalized, into small holes to be approximated by a Poisson process for hyperbolic billiard subsystems. 

\subsection{Differential of the lensed billiard map}\label{R2}
We wish here to obtain the differential of the lensed billiard map to be used in  the next section for the evaluation of Lyapunov exponents. 
For this purpose we introduce  
coordinates that differ in minor ways from  the more commonly employed conventions. The main difference is that, instead of the Jacobi coordinates adapted to wavefronts in a neighborhood of $(x,v)$ as in the standard reference \cite{chernov}, we use  arclength parameter along the boundary curves $\mathcal{S}$ of $\mathcal{A}_i$ and angles that are measured relative to
$v$ itself. The differential of the billiard map at reflections will differ from the more standard description by the absence of a cosine term at certain places. 
This small deviation from  standard conventions has been made in order to avoid having to  elaborate  on the behavior of wavefronts at refractions. For standard billiards, time and arclength parameters are everywhere proportional since particle speed is constant, but this is no longer the case across refractions, which introduces a few issues that we choose to avoid.

  Recall that the smooth pieces of the boundary of $\mathcal{B}$ are the pieces of the
boundaries of $\overline{\mathcal{A}}_0$ and $\overline{\mathcal{A}}_1$ oriented so that  the unit normal vector field $\mathbf{n}$  points towards the interior of $\mathcal{B}$. 
 Let $\mathbf{t}$ be the unit  tangent vector field to the boundary chosen so that $(\mathbf{t},\mathbf{n})$ is a positive orthogonal basis at each regular boundary point.
 Let us consider one step of the billiard trajectory from $\mathcal{O}_1$ in boundary piece $\mathcal{S}_1$ to $\mathcal{O}_2$ in  boundary piece $\mathcal{S}_2$. 
 The initial velocity
 is $v_1$ such that $v_1\cdot \mathbf{n}_1>0$ (standard dot product) and $\mathcal{T}(\mathcal{O}_1,v_1)=(\mathcal{O}_2,v_2)$. This is shown in the diagram of Figure \ref{differential} where, on the left, $\mathcal{T}$ produces a refraction and on the right a reflection.

\begin{figure}[htbp]
\begin{center}
\includegraphics[width=5.0in]{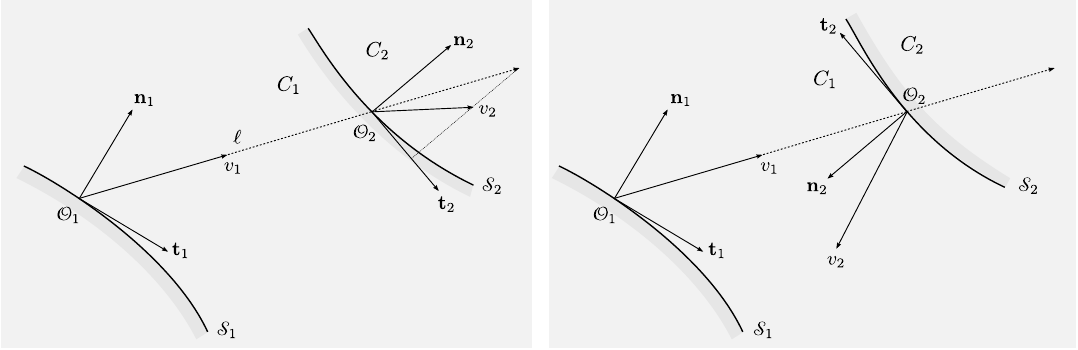}\ \ 
\caption{{\small Notation for the calculation of the differential of the lensed billiard. On the left a refraction and on the right a reflection. $C_1$ and $C_2$ are the values of the potential function at the indicated regions.  }}
\label{differential}
\end{center}
\end{figure} 
 
Let $J$ denote the  rotation matrix in $\mathbb{R}^2$ by $\pi/2$ counterclockwise. This is the generator of plane rotations: the linear map $v\mapsto \exp(\theta J)v$ rotates $v$ counterclockwise by angle $\theta$. Thus for a small interval of angles centered at $0$, this defines a neighborhood of velocities centered at $v$.
 We define coordinates $(x_1, \theta_1)$ in a neighborhood of $(\mathcal{O}_1, v_1)$ and  $(x_2, \theta_2)$ in a neighborhood of $(\mathcal{O}_2,v_2)$ so that any $(q,v)$ in the first neighborhood  and $(Q,V)=\mathcal{T}(q,v)$ in the second satisfy:
\begin{equation}\label{qQvV}
\begin{cases}
q&=\ \gamma_1(x_1) \text{ is a local parametrization of $\mathcal{S}_1$ by  arclength  with $\gamma_1(0)=\mathcal{O}_1$},\\
Q&=\ \gamma_2(x_2) \text{   is a local parametrization of $\mathcal{S}_2$ by arclength with $\gamma_2(0)=\mathcal{O}_2$},\\
v&=\ \exp(\theta_1 J)v_1,\\
V&=\ \exp(\theta_2J)v_2.
\end{cases}
\end{equation}

We wish to obtain the differential of $\mathcal{T}$ at $(\mathcal{O}_1,v_1)$:
$$ d\mathcal{T}_{(\mathcal{O}_1,v_1)}=\left(\begin{array}{cc}\frac{\partial x_2}{\partial x_1}(0,0) & \frac{\partial x_2}{\partial \theta_1}(0,0) \\[0.5em] \frac{\partial \theta_2}{\partial x_1}(0,0) & \frac{\partial \theta_2}{\partial\theta_1}(0,0)\end{array}\right).$$
The following additional notation is needed. Let $\kappa$ denote the geodesic curvature of  $\mathcal{S}_i$, which is defined by
$$ D_{\mathbf{t}}\mathbf{t} = \kappa \mathbf{n},$$
where $D_\mathbf{t}$ indicates standard directional  derivative of vector fields along $\mathbf{t}$.
  Let  
 $\ell$ be the Euclidean distance between $\mathcal{O}_1$ and $\mathcal{O}_2$, and set $\nu_i=v_i/|v_i|$, $i=1,2$. Further write $E$ for the total energy (kinetic plus potential) and
 $C_1$, $C_2$ for the values of the potential at the regions indicated in Figure \ref{differential}.
\begin{theorem}[Differential of the lensed billiard map]\label{billiard differential}
With the notations just given, the differential $d\mathcal{T}_{(\mathcal{O}_1,v_1)}$ in the coordinate systems $(x_i,\theta_i)$ is given by
$$ d\mathcal{T}_{(\mathcal{O}_1,v_1)}=\left(\begin{array}{cc}\frac{\nu_1\cdot\mathbf{n}_1(\mathcal{O}_1)}{\nu_1\cdot\mathbf{n}_2(\mathcal{O}_2)} & -\frac{\ell}{\nu_1\cdot\mathbf{n}_2(\mathcal{O}_2)}\\[0.5em] 
-2\kappa(\mathcal{O}_2)\frac{\nu_1\cdot \mathbf{n}_1(\mathcal{O}_1)}{\nu_2\cdot\mathbf{n}_2(\mathcal{O}_2)} & \left[1+\frac{2\ell \kappa(\mathcal{O}_2)}{\nu_1\cdot\mathbf{n}_2(\mathcal{O}_2)}\right]\frac{\nu_1\cdot\mathbf{n}_2(\mathcal{O}_2)}{\nu_2\cdot\mathbf{n}_2(\mathcal{O}_2)}\end{array}\right)$$
if a reflection occurs at $\mathcal{O}_2$ or
$$ d\mathcal{T}_{(\mathcal{O}_1,v_1)}=\left(\begin{array}{cc}\frac{\nu_1\cdot\mathbf{n}_1(\mathcal{O}_1)}{\nu_1\cdot\mathbf{n}_2(\mathcal{O}_2)}  &  -\frac{\ell}{\nu_1\cdot\mathbf{n}_2(\mathcal{O}_2)} \\[0.5em]
-\alpha\kappa(\mathcal{O}_2)\frac{\nu_1\cdot \mathbf{n}_1(\mathcal{O}_1)}{\nu_2\cdot\mathbf{n}_2(\mathcal{O}_2)}\sqrt{\frac{E-C_1}{E-C_2}}& \left[1+\frac{\alpha\ell\kappa(\mathcal{O}_2)}{\nu_1\cdot \mathbf{n}_2(\mathcal{O}_2)}\right]\frac{\nu_1\cdot\mathbf{n}_2(\mathcal{O}_2)}{\nu_2\cdot\mathbf{n}_2(\mathcal{O}_2)}\sqrt{\frac{E-C_1}{E-C_2}}\end{array}\right)$$
if a refraction occurs. In the latter case,
$$\alpha := 1-\left[
1-\frac{C_2-C_1}{E-C_1}\frac1{(\nu_1\cdot \mathbf{n}_2(\mathcal{O}_2))^2}
\right]^{\frac12}. $$
The determinant of  $d\mathcal{T}_{(\mathcal{O}_1,v_1)}$ is $v_1\cdot\mathbf{n}_1(\mathcal{O}_1)/v_2\cdot\mathbf{n}_2(\mathcal{O}_2)$ in both cases.
\end{theorem}

The similarities between the differentials for reflection and refraction become more apparent with the following  notation: $\ell_i:=\ell |v_i|$ where we recall that
 $|v_i|=\sqrt{2(E-C_i)/m}$. Then
\begin{align*}
 \left(d\mathcal{T}_{(\mathcal{O}_1,v_1)}\right)_\text{\tiny reflection}&=\left(\begin{array}{cc}\frac{v_1\cdot\mathbf{n}_1(\mathcal{O}_1)}{v_1\cdot\mathbf{n}_2(\mathcal{O}_2)} & -\frac{\ell_1}{v_1\cdot\mathbf{n}_2(\mathcal{O}_2)}\\[0.5em] 
-2\kappa(\mathcal{O}_2)\frac{v_1\cdot \mathbf{n}_1(\mathcal{O}_1)}{v_2\cdot\mathbf{n}_2(\mathcal{O}_2)} & \left[1+\frac{2\ell_1 \kappa(\mathcal{O}_2)}{v_1\cdot\mathbf{n}_2(\mathcal{O}_2)}\right]\frac{v_1\cdot\mathbf{n}_2(\mathcal{O}_2)}{v_2\cdot\mathbf{n}_2(\mathcal{O}_2)}\end{array}\right)\\
\left(d\mathcal{T}_{(\mathcal{O}_1,v_1)}\right)_{\text{\tiny refraction}}&=\left(\begin{array}{cc}\frac{v_1\cdot\mathbf{n}_1(\mathcal{O}_1)}{v_1\cdot\mathbf{n}_2(\mathcal{O}_2)}  &  -\frac{\ell_1}{v_1\cdot\mathbf{n}_2(\mathcal{O}_2)} \\[0.5em]
-\alpha\kappa(\mathcal{O}_2)\frac{v_1\cdot \mathbf{n}_1(\mathcal{O}_1)}{v_2\cdot\mathbf{n}_2(\mathcal{O}_2)}& \left[1+\frac{\alpha\ell_1\kappa(\mathcal{O}_2)}{v_1\cdot \mathbf{n}_2(\mathcal{O}_2)}\right]\frac{v_1\cdot\mathbf{n}_2(\mathcal{O}_2)}{v_2\cdot\mathbf{n}_2(\mathcal{O}_2)}\end{array}\right).
\end{align*}

Although the proof of Theorem \ref{billiard differential} is relatively straightforward, we nevertheless provide the details in Appendix \ref{Ap_D} since  refractions are not considered in standard references, and on account of the differences with the more commonly used coordinate choice for ordinary billiards. 

\section{Numerical study of Lyapunov exponents}\label{observations}

In this section we wish to explore, mainly numerically, the ways in which the presence of lenses affects the billiard dynamics as measured by the Lyapunov exponents of the billiard flow.   Thus we are concerned with the following quantity:
$$
\chi:=
\lim_{n\rightarrow \infty}\frac{ \log \left\|d\mathcal{T}_{(x,v)}^n \xi\right\|}{T_n},   
 $$
where $T_n$ is the time elapsed up to step $n$. 

Due to invariance of the Liouville measure (which is associated to an invariant $2$-form on phase space) we know that, on ergodic components, exponents come in pairs $\chi,-\chi$. We refer to the nonnegative  value as the {\em Lyapunov exponent} of the system.
In the numerical experiments shown here, we compute the average of $\chi$ over  a large sample of randomly chosen initial conditions.  
Not all the examples considered are ergodic, so the numerically computed $\chi$   may involve an average over the  values on ergodic components. 
Also note that exponents are obtained numerically using the differential of the billiard map described in Section \ref{R2}.

In all cases, we set the mass parameter $m=1$
and particle total energy  $E=1$. Thus particle  speed in regions where the potential is $0$ equals $\sqrt{2}$ and a region $\mathcal{A}$  on which the potential value is greater than or equal to $1$ acts as a standard billiard scatterer; that is, the boundary of $\mathcal{A}$ becomes part of the reflecting boundary.

We identify a number  of common features specific to lensed billiards among the examples considered here. We  also provide  conceptual explanations for some of these features  based on the results of Section
\ref{switching}. The  present discussion is exploratory in nature and the explanations for the observed features
 are heuristic and informal, although  we expect they will serve as a  basis for a more detailed study to  be pursued elsewhere.

\subsection{A multiparameter billiard case study}
Let us consider first the multiparameter family   depicted in  Figure \ref{shape1}. We wish to use it to illustrate a number of  properties that seem to be   typical of lensed billiards
but distinct in comparison with standard (purely reflecting) billiards.

As a first experiment,
let us obtain   approximate values for the positive Lyapunov exponent   as a function of the curvature parameter
$\kappa\in [-2,2]$ when $C=1$, initial speed is $\sqrt{2}$, $\ell_1=1$ and $\ell_2=2$.  Since $C=1$, this is a standard billiard system on the region
where the potential is $0$. This will serve as a basis for comparison when setting $C<1$.

\begin{figure}[htbp]
\begin{center}
\includegraphics[width=3.5in]{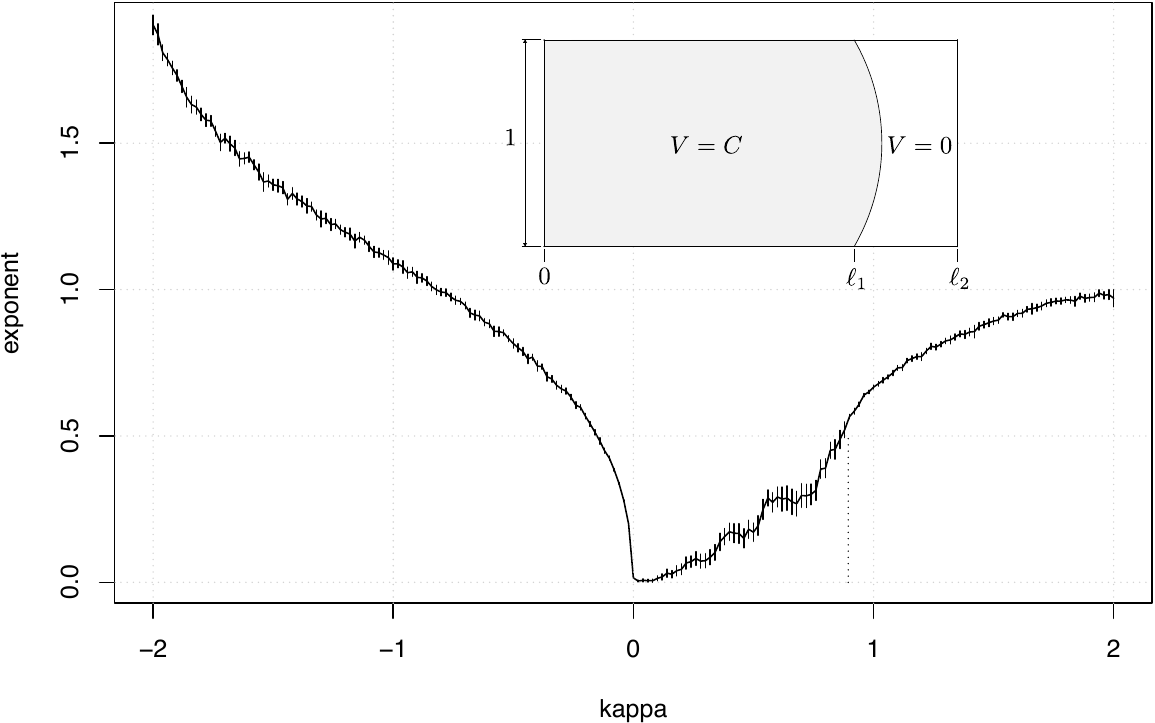}\ \ 
\caption{{\small Positive Lyapunov exponent as function of the signed curvature parameter $\kappa$.  The parameters are $\ell_2-\ell_1=1$ and $C=1$, so
this is a standard (purely reflecting) billiard in the complement of the shaded region (see insert), to the right of the curved line, where $V=0$.
}}
\label{Exper1_graph1}
\end{center}
\end{figure}  

The results are shown in Figure \ref{Exper1_graph1}. Negative values of $\kappa$ correspond to semidispersing billiards and positive values to focusing billiards. For  each of  $200$ equally spaced values of $\kappa$ we compute the mean value for the exponent over a sample of $100$ orbits of length $1000$ with random initial conditions: points
are taken from the vertical wall on the right with the uniform distribution and velocities according to the cosine law density. Whiskers represent $95\%$ confidence intervals based only on
the sample variation of the simulated data.  Other potential sources of errors may not be accounted for in this and similar  graphs.

It is worth noting  the  distinct nature of the graph over the interval $0<\kappa<2/\sqrt{5}$. (The dashed vertical line is at $\kappa=2/\sqrt{5}$.) The 
upper limit of this interval is the curvature for  which the     center of the circular arc lies  on the vertical right wall. The less well-defined behavior in the range $0\leq \kappa\leq 2/\sqrt{5}$ is likely due to breakdown of ergodicity. In fact, for $\kappa<0$, the billiard is semidispersing, hence ergodic. For $\kappa>2/\sqrt{5}$, Bunimovich's   condition  (see \cite{buni}) for ergodicity based on defocusing in nowhere dispersing billiards holds. This observation offers a clue  to help  identify  lensed billiard systems that fail to be ergodic in   parametric families, although we do not pursue the determination of ergodicity (or failure of ergodicity) in   detail in the present paper except to note it in  some obvious cases.

Let us now  investigate the effect of setting $C<1$. Figure \ref{Exper1_graph2} refers to the same family as in  Figure \ref{shape1} (or  Figure \ref{Exper1_graph1}) except that the fixed parameters are
$\kappa=-1$, $\ell_1=1$, $\ell_2=2$ on the left and  $\kappa=1$,  $\ell_1=1$, $\ell_2=4$ on the right. The potential   $C$ is the  parameter to be varied.
Comparison with the previous graph suggests a discontinuity at $C=1$ (which corresponds to a standard reflecting billiard system.) 
Sample sizes and orbit length for each value of $C$ are as in the previous experiment.  For $0<C<1$ and $\kappa=-1$, a fraction of arrivals from the right at the line of potential function discontinuity   undergoes dispersing reflections. This and the sharper appearance of the graph in that range may  indicate that the system is ergodic and that for $C\leq 0$ it  is not.

\begin{figure}[htbp]
\begin{center}
\includegraphics[width=5in]{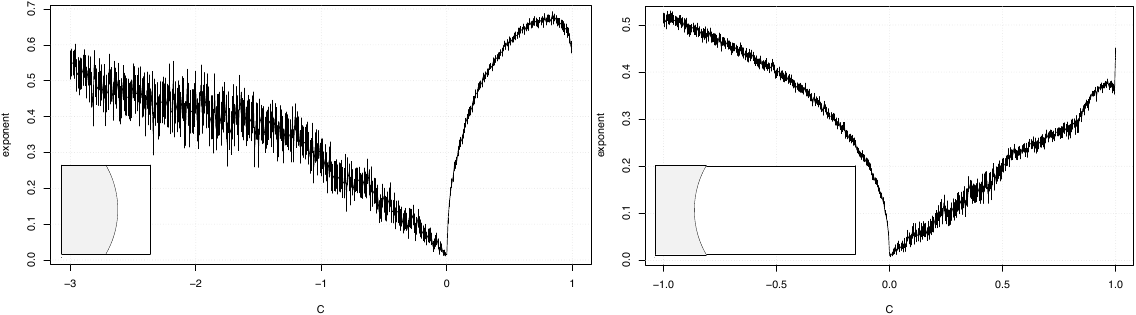}\ \ 
\caption{{\small Family of billiard tables of Figure \ref{shape1}
with    $C$ over the interval $[-3, 1)$.
On the left, $\kappa=-1$, $\ell_1=1$, $\ell_2=2$; on the right, $\kappa=1$, $\ell_1=1, \ell_2=4$.
 The ``fat tail'' on the left for $C\leq 0$ and high variance on the right for part of the range $0\leq C< 1$ suggest that the billiard system is not ergodic in those ranges  of $C$.  
 The larger $\ell_2$ on the right was chosen under the expectation that, for $C$ sufficiently close to $1$,  the defocusing mechanism causing ergodicity will come into play. The somewhat  more clearly defined shape  of the graph on the right roughly in the range $0.8<C<1$ seems to justify this expectation.}}
\label{Exper1_graph2}
\end{center}
\end{figure}

Referring now to the system associated to the graph on the right of Figure \ref{Exper1_graph2},  for negative $C$, the billiard particle can become momentarily trapped
 in the shaded region of Figure \ref{shape1}, where it behaves as in a semidispersing billiard. This suggests that for $\kappa >0$ and $C<0$ this lensed billiard 
may be ergodic. We see again, by comparison with Figure \ref{Exper1_graph1}, a discontinuity at $C=1$. An explanation for the jump discontinuity will be provided shortly.  

Another feature that can be noted on both graphs and others shown later is the presence of a local maximum for positive values of $C$ less than but  close to the point of exponent  discontinuity ($C=1$). We will have more to say about this shortly.

For $C<0$ the exponent grows with
$\sqrt{|C|}$. We leave a detailed proof of this and other remarks to a future paper. However,  the following  elementary observation can be adduced to justify this claim.
Let $N(\xi)$ denote the number of collisions with the boundary of   the shaded  region $\mathcal{A}_0$ (Figure \ref{Exper1_graph2} on the right) that a trajectory with initial state $\xi$ 
undergoes during a sojourn in $\mathcal{A}_0$. Let $T(\xi)$ denote the corresponding time of that sojourn. Now consider
$$\frac1{T(\xi)}\log \left\|d\mathcal{T}^{N(\xi)}w\right\|=\frac{N(\xi)}{T(\xi)}\frac1{N(\xi)}\log \left\|d\mathcal{T}^{N(\xi)}w\right\|. $$
As $C\rightarrow -\infty$,  both $N(\xi)$ and $T(\xi)$ grow to infinity for almost all $\xi$.  It can be expected that the limit on the left will be the exponent for the billiard flow while
the limit on the right will be the product of the exponent for the billiard map times the limit of $N(\xi)/T(\xi)$. 
Under the assumption that the system is indeed ergodic for negative $C$, this quotient converges to $\langle N\rangle/\langle T\rangle$ which, 
by Proposition \ref{averages_2}, is $ \mathcal{s} L/\pi A$, where $L$ is the perimeter of $\mathcal{A}_0$, $A$ is its area, and $\mathcal{s}=\sqrt{(E-C_0)/(E-C_1)}=\sqrt{1+|C|}$.   

\begin{figure}[htbp]
\begin{center}
\includegraphics[width=2.5in]{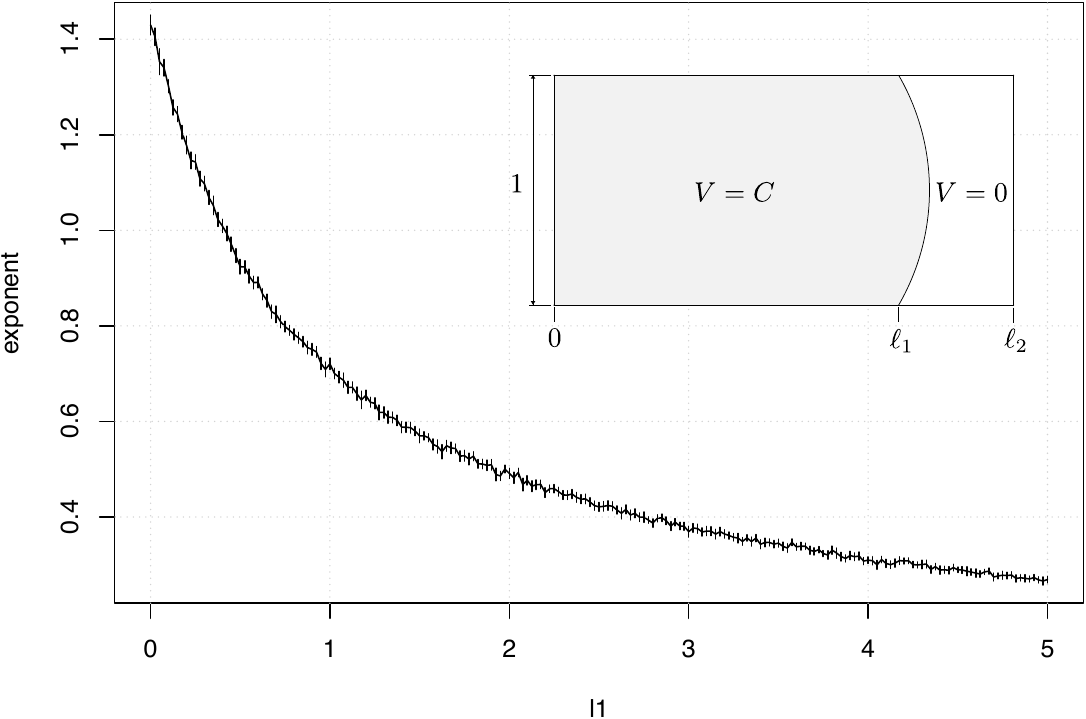}\ \ 
\caption{{\small Here the parameters are: $\kappa=-1$, $C=0.5$, $\ell_2=\ell_1+0.5$, and $\ell_1$ ranges from $0$ to $5$. }}
\label{Exper1_graph4}
\end{center}
\end{figure} 
 
 The dependence on the geometric parameter $\ell_1$ is particularly simple. 
 In Figure \ref{Exper1_graph4} we set $\kappa=-1$, $C=0.5$, $\ell_2=\ell_1+0.5$, and let $\ell_1$ range from $0$ to $5$. A fairly regular dependence of the  exponent on the varying parameter is now observed.
 This  is easily explained by noting  that the number of collisions contributing to the exponent is independent of $\ell_1$, while the time spent in the region where $V=C$ ($0<C<1$) is proportional to $\ell_1$.

It was  observed in the examples that the exponent has a jump discontinuity, as a function of $C$, at the point of transition from lensed to purely reflecting billiard.
Let us, informally, estimate the size of the jump.
We only provide here a heuristic argument, based on Proposition \ref{averages_2} and Corollary \ref{sojourn stat}. (See the example of Figure \ref{BuniExpGraph} for a bit of numerical evidence.)

Suppose the standard billiard systems in the closures of  ${\mathcal{A}}_0$ and ${\mathcal{A}}_1$ are ergodic.
Let  $\chi_\epsilon$ be 
the Lyapunov  exponent of the lensed billiard flow on $\overline{\mathcal{A}}_0\cup \overline{\mathcal{A}}_1$ for which $C_1=0$ and $C_0=E-\epsilon$.
Let $\chi_0$ denote the positive  exponent when $\epsilon =0$. That is, the exponent for the standard  billiard flow in $\overline{\mathcal{A}}_1$, where $C_1=0$. 
Then,  as $\epsilon$ approaches $0$, we expect   $$\chi_\epsilon\rightarrow \frac{A_1}{A_0+A_1}\chi_0$$
to hold,  
where $A_i$ is the area of $\mathcal{A}_i$.  In the example of Figure \ref{BuniExpGraph}, the two areas are equal, therefore as $C\rightarrow 1$ the limit exponent is expected to be
half the value it assumes for $C=1$. This is very nearly what the numerical example shows. (See the right-hand side of that figure.)

To explain this feature, let us first introduce some notation.  Let $\mathcal{R}_i$ be the return map to $\mathcal{C}$ (the intersection of the boundaries of $\mathcal{A}_0$ and $\mathcal{A}_1$) after moving into $\mathcal{A}_i$. Let $\mathcal{N}$ be the number of returns to $\mathcal{C}$ before a trajectory that enters $\mathcal{A}_1$ finally switches 
to $\mathcal{A}_0$. Thus $\mathcal{R}_0$ corresponds to a sojourn into $\mathcal{A}_0$ whereas a sojourn into $\mathcal{A}_1$ involves $\mathcal{N}$ applications 
of $\mathcal{R}_1$. For each positive integer $k$, let
$$\mathcal{F}_k = \mathcal{R}_0\circ \mathcal{R}_1^{\mathcal{N}_k}\circ \cdots\circ \mathcal{R}_0\circ\mathcal{R}_1^{\mathcal{N}_1}. $$
Let us fix a typical orbit, beginning with a $z=(x,v)$ entering $\mathcal{A}_1$:
$$z=z^{(1)}_1,  z^{(0)}_1 := \mathcal{R}_1^{\mathcal{N}_1}(z^{(1)}_1),  z^{(1)}_2:=\mathcal{R}_0(z^{(0)}_1),  \dots,   z_k^{(0)}:=\mathcal{R}_1^{\mathcal{N}_k} (z_k^{(1)}),  
z^{(1)}_{k+1}:=\mathcal{R}_0(z_k^{(0)}), \dots.$$
Note that $\mathcal{N}_j=\mathcal{N}_j(z_j^{(1)})$.  
Times and number of collisions in each sojourn will be written $T^{(i)}$ and $N^{(i)}$ for $i=0, 1$. The total time after $k$ sojourns in $\mathcal{A}_0$ and $k$ sojourns in
$\mathcal{A}_1$ is
$$T_k(z) = T^{(1)}(z^{(1)}_1) + T^{(0)}(z^{(0)}_1)+ \cdots +   T^{(1)}(z^{(1)}_k) + T^{(0)}(z^{(0)}_k)
= T^{(0)}_k(z)  + T^{(1)}_k(z),
$$
where we have defined 
$$T^{(0)}_k(z)=\sum_{j=1}^k T^{(0)}(z^{(0)}_j), \ \ \  T^{(1)}_k(z)= \sum_{j=1}^k T^{(1)}(z^{(1)}_j).$$
We similarly define $N^{(i)}_k(z)$ for the number of collisions after  $k$ sojourns in $\mathcal{A}_0$ and $\mathcal{A}_1$.
Let $\xi$ be a vector defining an infinitesimal variation of the initial condition $z$. We write $\xi^{(1)}_1=\xi$ and, inductively,
$$\xi^{(0)}_j:=\frac{d\mathcal{R}_1^{\mathcal{N}_j}\xi^{(1)}_j}{\left\|d\mathcal{R}_1^{\mathcal{N}_j}\xi^{(1)}_j \right\|}, \ \ \ \xi^{(1)}_j:=\frac{d\mathcal{R}_0^{\mathcal{N}_{j-1}}\xi^{(0)}_{j-1}}{\left\|d\mathcal{R}_0^{\mathcal{N}_{j-1}}\xi^{(0)}_{j-1} \right\|}.$$
Then, omitting reference to $z^{(i)}_j$ to simplify the notation,
$$\frac{\log \|d\mathcal{F}_k\|\xi}{T_k}  = \frac{T_k^{(0)}}{T_k}\frac{N_k^{(0)}}{T_k^{(0)}} \left(\frac1{N^{(0)}_k}\sum_{j=1}^k \log\left\|d\mathcal{R}_0 \xi_j^{(0)}\right\|\right)+ 
\frac{T_k^{(1)}}{T_k}\left(\frac{1}{T_k^{(1)}} \sum_{j=1}^k \log\left\|d\mathcal{R}_1^{\mathcal{N}_j} \xi_j^{(1)}\right\|\right).$$
Focusing attention first on the first summand on the right side of the above equality, the following is expected to hold: the quantity in parentheses is almost everywhere finite, 
the quotient $ {T_k^{(0)}}/{T_k}$ converges to an area ratio based on Proposition \ref{averages_2} and Corollary \ref{sojourn stat} and, due to the same proposition, $N^{(0)}_k/T^{(0)}_k$ should go to $0$. The second
term contains the quotient $T_k^{(1)}/T_k$ which, by Proposition \ref{averages_2} and Corollary \ref{sojourn stat} should limit to $A_1/(A_0+A_1)$. Finally, 
the quantity in parentheses should converge to the exponent $\chi_0$.

Summarizing common features of lensed billiards Lyapunov exponents observed so far:
\begin{itemize}
\item There exists a discontinuity of the Lyapunov exponent  as $C$ approaches the total energy $E$, that is, at the transition from lensed to standard (purely reflecting) billiard. 
\item  In the examples considered so far, the exponent grows as $\sqrt{|C|}$. Different asymptotic behavior  will be noted 
in the additional  examples of Section \ref{BuSiBi}.
\item For values of $C$ less than but close to $E$, the exponent has a local maximum as a function of $C$. This  common feature  is more subtle and its explanation requires a more detailed analysis. See further remarks in Section \ref{lmaxima}.
\end{itemize}

\subsection{Variations on Bunimovich and Sinai billiards}\label{BuSiBi}

Let us consider here a few more examples from different parametric families in which $C$ is the sole parameter, beginning with the variant of the   Bunimovich (half-) stadium shown on the left of Figure \ref{Exper2_graph1}. The positive Lyapunov exponent  is given as a function of the potential function  parameter  $C$.  The standard  stadium billiard corresponds to $C=0$.

We notice greater sample variance for $0<C\leq 1$. In this range, the billiard system is not ergodic. In fact, when the initial position lies in the unshaded square region where the potential is $0$, and the initial velocity is such that the first collision with the vertical line of potential discontinuity is greater than the critical angle, the trajectory must remain in that region since the normal component of the velocity is not large enough to overcome the potential barrier.

\begin{figure}[htbp]
\begin{center}
\includegraphics[width=5.0in]{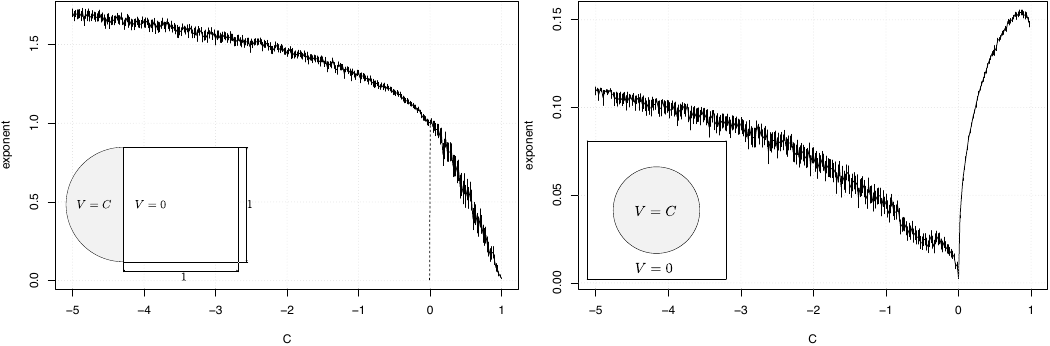}\ \ 
\caption{{\small Lensed Bunimovich and Sinai billiards. The system on the left is not ergodic for $C>0$, and the one on the right is not ergodic for $C\leq 0.$
}}
\label{Exper2_graph1}
\end{center}
\end{figure} 

On the right of Figure \ref{Exper2_graph1} we have a lensed version of Sinai's semidispersing billiard in a square with reflecting boundary. (One should compare this graph with
that on the left of Figure \ref{Exper1_graph2}.)  For all $C\leq 0$ the lensed Sinai billiard is not ergodic since there is a positive measure of trajectories that are trapped inside the circular lens. The condition for being trapped is that the angle which the  initial velocity makes with the inner normal vector to the circle be less than the critical angle.

It will be explained shortly that, in  the limit as $C\rightarrow-\infty$, the Lyapunov exponent for the system on the right for trajectories started in the region of $0$ potential is expected to
converge to the exponent of the Sinai billiard, corresponding to  $C=1$.  We haven't determined  the asymptotic behavior of the graph on the left  for $C\rightarrow -\infty$.

Figure \ref{BuniExpGraph} refers to the lensed half-stadium billiard  shown in the inset of  the graph on the left.
It shows a billiard trajectory for strongly negative $C$.  A billiard trajectory originating from the right side of the stadium must fall into the shaded region at the moment of first arrival at the separating vertical line, and the velocity after crossing must be nearly perpendicular to the line.
Note the  growth of the exponent for negative $C$ for large values of $|C|$.
In these experiments the particle was initiated on the right side of the  stadium with speed $\sqrt{2}$. 
When it moves to the left side, it acquires a large speed, so the exponential rate of growth  of tangent vectors increases accordingly; 
the exponent should be growing proportionally to $\sqrt{|C|}$ as $C\rightarrow -\infty$. On the right: zooming in near $C=1$ shows more clearly the discontinuity at $C=1$.
As expected (given that the two sides of the billiard table have equal area), the jump discontinuity amounts to a near doubling of the exponent.

\begin{figure}[htbp]
\begin{center}
\includegraphics[width=5.0in]{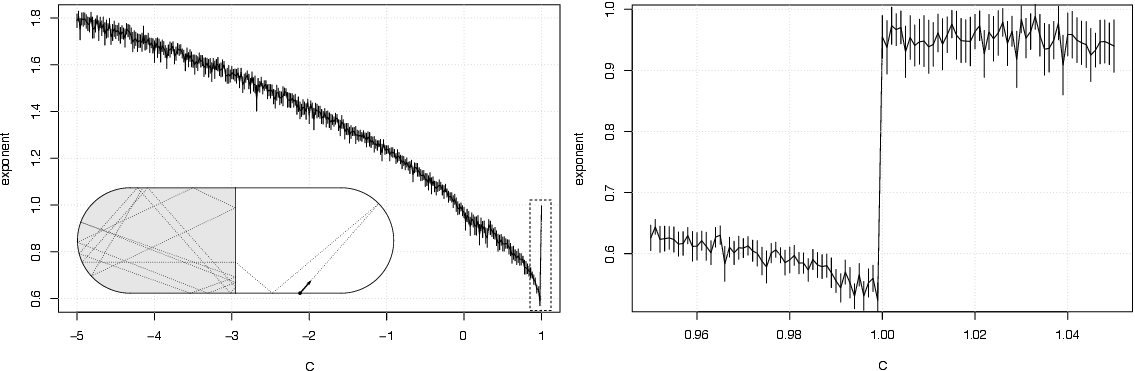}\ \ 
\caption{{\small  Half-lensed Bunimovich stadium} }
\label{BuniExpGraph}
\end{center}
\end{figure}

Let us turn now to the family shown in Figure \ref{BuSiGraph}. This lensed system may be viewed as an interpolation between a focusing billiard, when $C=0$, and a  semidispersing  billiard, when $C=1$.
 One observes  several regimes of behavior over
different ranges of the potential parameter $C$: one regime between $0$ and $1$ (with a discontinuity at $C=1$), one between roughly $-3.3$ and $0$,
and one for $C$ less than approximately $-3.3$.  On this last range, the exponent grows very slowly as $|C|$ increases and appears (for very large values of $|C|$ far outside the range of the graph) to stabilize near   $0.35$, which is roughly the value of the exponent for $C=1$. This apparent coincidence will be explained in Subsection \ref{deep}.

\begin{figure}[htbp]
\begin{center}
\includegraphics[width=3.5in]{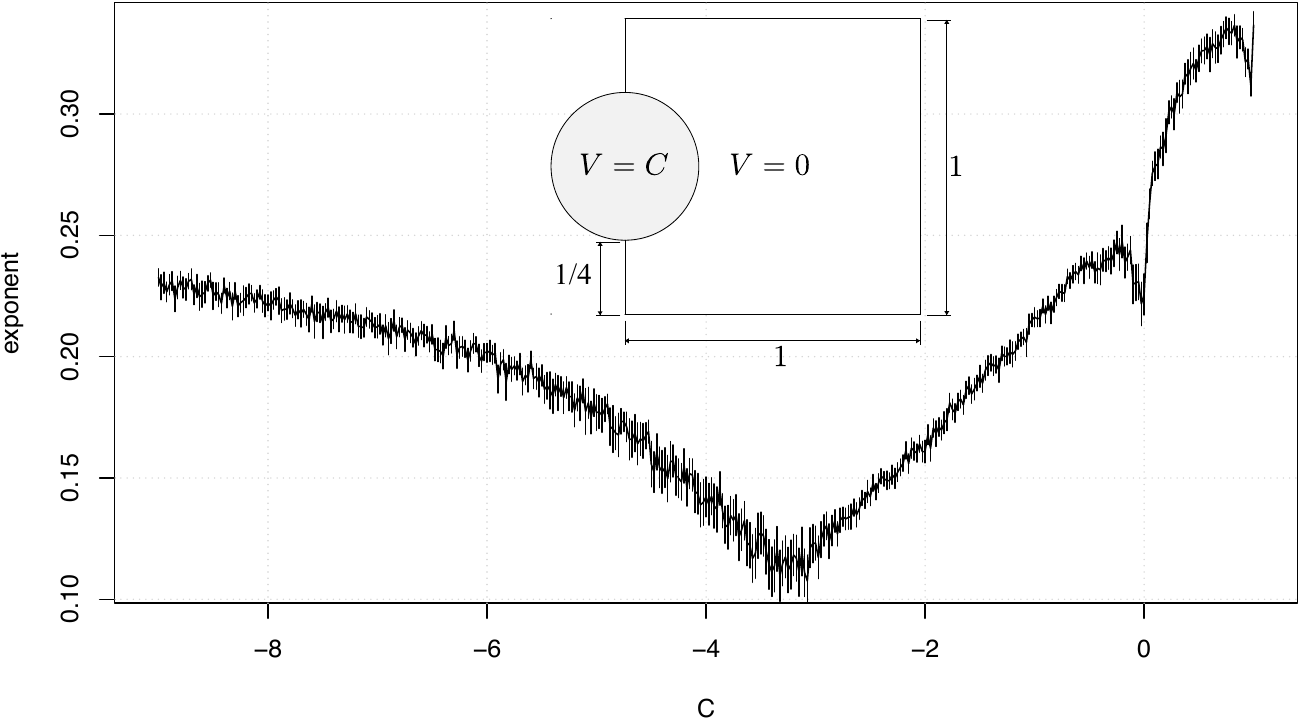}\ \ 
\caption{{\small 
 A lensed interpolation between Bunimovich and Sinai.}}
\label{BuSiGraph}
\end{center}
\end{figure} 

\subsection{Local maxima}\label{lmaxima}
Let us now return to the presence of local maxima for the  Lyapunov exponent as a function of the parameter $C$. This ubiquitous feature is seen  especially clearly in 
the graph on the right-hand side of Figure \ref{Exper2_graph1}, over the interval $0\leq C<E$, for the Sinai lensed billiard system.    The following  comments apply to this system. Though   short of constituting a proof, they contain key ingredients needed for a detailed analysis under more general conditions. We leave this analysis  to a future study. 

When $C=0$, the circular lens has no effect on the motion of the billiard particle and the Lyapunov exponent is $0$. It seems clear that, as $C$ increases from $0$, exponential separation of trajectories should follow due to both (dispersing) reflections and refractions. What is needed then is an explanation for  the observed decrease of the Lyapunov exponent as $C$ increases to $E$ for small values of  $E-C$. 

  \begin{figure}[htbp]
\begin{center}
\includegraphics[width=4in]{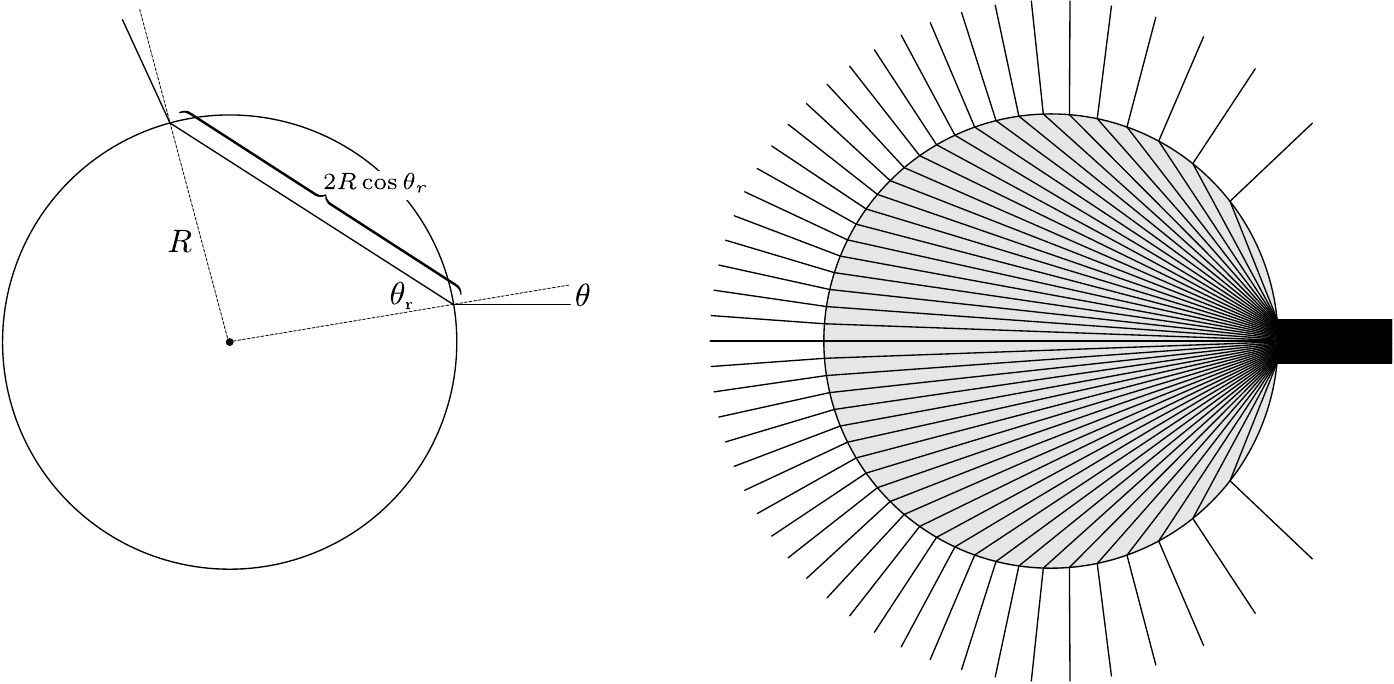}\ \ 
\caption{{\small Left: The time spent by a segment of trajectory inside the lens is $T(\theta)=2R\cos\theta_r/\mathcal{s}$  where $\theta_r$ and
$\theta$ are related by Snell's law and $\mathcal{s}=\sqrt{2(E-C)/m}$ is particle speed. Right: Large expansion factor when $E-C$ is small. The circular arc  consisting of the intersection of the boundary of the  lens and the incident parallel beam  has angle $2\theta_{\text{\tiny crit}}$.  As $C$ approaches $E$, this angle approaches $0$ and
the  directions of trajectories as they exit the lens span an interval of angles approaching $(0,2\pi)$.}}
\label{lens_sinai}
\end{center}
\end{figure} 

We suppose, for the sake of arriving at a rough estimation, that particle collisions with the  lens are statistically independent of each other and the angle of incidence is random and satisfies the (cosine) distribution $\frac12 \cos\theta\, d\theta$. An elementary calculation yields 
probability $p_{\text{\tiny refract}}=\sqrt{\frac{E-C}{E}}$ for a
  collision to result in  refraction, where $E$ is the total energy. The expected expansion rate (of separation of nearby trajectories) in the event of a refraction turns out to be  proportional to $\lambda_\text{\tiny refract}=\sqrt{\frac{E}{E-C}}$. 
 This is the result of an elementary but long calculation, which we omit.  However, this large value,  when $C$ is close to $E$, is easily understood since the particle will undergo a refraction if the angle of incidence is very
small, specifically $|\sin \theta|<\sin \theta_{\text{\tiny crit}} = \sqrt{\frac{E-C}{E}}$, but it fans out inside the lens over the full range $|\theta|<\pi/2$ with the cosine law distribution. (See Proposition \ref{crosscos}.) Figure \ref{lens_sinai}, on the right, illustrates the situation.

The contribution $\tau_{\text{\tiny in}}$
 of the 
motion inside the lens
 to the total time elapsed over a large number of collision events (reflections and refractions) does not depend on $C$. In fact, on one hand, the average time  spent inside the lens during one refraction event is easily shown to be $\frac{\pi R}{\sqrt{2(E-C)/m}}$. This is obtained by noting that 
 the time $T(\theta)$  the segment of trajectory that enters the lens with angle $\theta$ spends inside is $$T(\theta)=2R\cos\theta_r/\sqrt{2(E-C)/m}$$ where $\theta_r$ is
 the refracted angle obtained from $\theta$ by Snell's law (see Figure \ref{lens_sinai}); by Proposition  \ref{crosscos} once again, $\theta_r$ has the cosine distribution. The mean value is then an easy integral calculation. 
On the other hand,  the proportion of refracting collisions is $\sqrt{\frac{E-C}{E}}$, so the term $\sqrt{E-C}$ cancels out, giving the value $\tau_{\text{\tiny in}}=\pi R/\sqrt{2E/m}$. Thus  refractions contribute over many  collisions an expansion rate per collision proportional to
\begin{equation*}\label{refraction term}(\lambda_\text{\tiny refract})^{p_{\text{\tiny refract}}} =\exp\left({\sqrt{\frac{E-C}{E}} \log \sqrt{\frac{E}{E-C}}}\right),\end{equation*}
whose logarithm  decreases to $0$ as $C$ approaches $E$ from below. To this should be added a term that contains the contribution to the Lyapunov exponent due to reflections, which is obtained by standard geometric calculations.  
The result is an expansion factor which, for small $E-C$, has the form
$$\frac{\sqrt{\frac{E-C}{E}} \log \sqrt{\frac{E}{E-C}} + \zeta}{\tau+\tau_{\text{\tiny in }}}, $$
where $\zeta$ is a positive quantity that does not depend on $C$, $\tau$ is the mean time of motion outside of the lens between two consecutive returns to the lens, and 
$\tau_{\text{\tiny in }}$ is the already defined fraction of time of motion inside the lens.  Its presence  in the denominator
explains the previously noted discontinuity of the Lyapunov exponent at $C=E$. 
We conclude that, for small values of $E-C$ and under the  simplifying independence assumption made above, the Lyapunov exponent is expected to decrease to a limit value as $C$ increases towards $E$.

The mechanism suggested here does not account for the second local maximum seen in the example of Figure \ref{BuSiGraph}, in the range $C<0$, $C$ close to $0$. A different analysis is needed, which we won't carry out here.

\subsection{Deep potential well limit} \label{deep} The  examples of lensed billiards discussed in this paper suggest that exploring the asymptotic properties of the exponent   as $C\rightarrow -\infty$ may be a fruitful direction for further study. One may   ask, in particular, whether there are dynamical systems that realize the limit in some sense.  The term {\em deep well} systems will
be used to refer to lensed billiards with strongly negative $C$ or to the limit systems when they can be meaningfully specified.

Let us first look at circular lenses as
in the examples of Figure \ref{BuSiGraph} and the right of  Figure \ref{Exper2_graph1}.  We refer first to the left of Figure \ref{ExplanBuSi}.
As $C$ tends to $-\infty$, a trajectory entering the circular potential well is deflected by refraction towards the radial direction and returns, after a very short time interval (since the speed is very high inside the disc), to a point on the circle which is very close to that from which it entered the circle.  The angle relative to the normal vector
at which the trajectory reaches the right semicircle in the lens boundary after one reflection with the left semicircle  is the same 
as the angle at which the trajectory enters the disc. Therefore the trajectory leaves the disc at the moment of first return to the left semicircle. 
Furthermore, the velocity with which the trajectory exits the circle is very close to what it would be under specular reflection at that point.
 Effectively,  the system behaves in the limit as that  for which $C=E=1$.  The limit trajectory is shown in the figure as a dashed line. The duration of the sojourn inside the disc (moving along the
 radial direction) is zero. Thus it makes sense to say  that the limit system is   the Sinai-type semi-dispersing standard billiard. 

\begin{figure}[htbp]
\begin{center}
\includegraphics[width=3.5in]{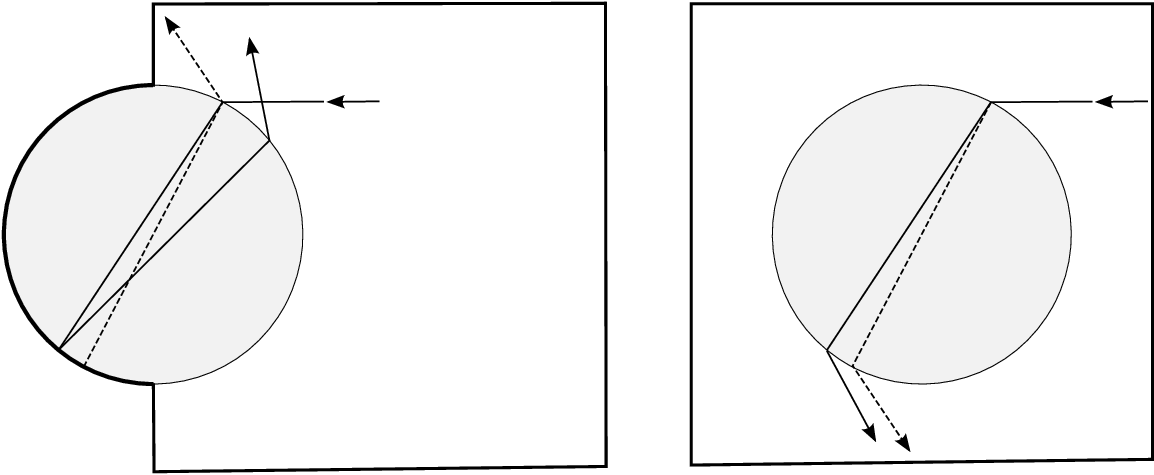}\ \ 
\caption{{\small 
In the limit $C\rightarrow -\infty$,  the system on the left behaves as if $C=1$, corresponding to  a semi-dispersing billiard. For the system on the right, 
the deep well limit is the composition of a standard Sinai billiard (with reflecting walls) and a rotation by $\pi$.}}
\label{ExplanBuSi}
\end{center}
\end{figure}

The same argument applies to the lensed Sinai billiard on the right of Figure \ref{ExplanBuSi}. Here, the deep well limit can be described as follows. Let $\mathcal{T}_0$ be the billiard map of the Sinai billiard in a square with reflecting sides. Let $\mathcal{R}$ be the rotation of the plane $\mathbb{R}^2$ by $\pi$, where the origin is placed at the center of the disc.
Note that these two maps commute and that $\mathcal{R}^2$ is the identity.
Then, disregarding the radial segment  of  trajectory  inside the disc (whose duration is $0$), the limit   system is generated by the map $\mathcal{T}=\mathcal{R}\circ \mathcal{T}_0$.

When the lens region $\mathcal{A}_0$ is such that the standard billiard system in it is ergodic,  the deep well limit may best be described as a random dynamical system of the following type.  We assume for concreteness that the particle mass is $\sqrt{2}$ and the total energy is $E=1$. Let $A$ be the area of $\mathcal{A}_0$
and $a$ the length of the intersection $\mathcal{C}$ of the boundaries of $\mathcal{A}_0$ and $\mathcal{A}_1$. Set $T=\pi A/a$.
Referring  to Figure \ref{DeepWell}, the random system will then be a billiard-like system in $\mathcal{A}_1$  that reflects specularly on the part of the boundary  not including  $\mathcal{C}$ and,  
on reaching  $\mathcal{C}$, trajectories  jump to a  point $x\in \mathcal{C}$  and assume velocity $v$ such that
$(x,v)$ are random variables distributed according to  the Liouville measure: uniform distribution for $x$ and the cosine law distribution for $v$. The jump is not instantaneous but happens
with a random time delay with mean $T$. If the closed billiard system in $\mathcal{A}_0$ is hyperbolic, one may expect this random time to be exponentially distributed. 
This possible description of a deep well billiard limit is motivated by Corollary \ref{sojourn stat} and Proposition \ref{crosscos}, as well as the example of Figure \ref{density_sojourn}.

\begin{figure}[htbp]
\begin{center}
\includegraphics[width=2.0in]{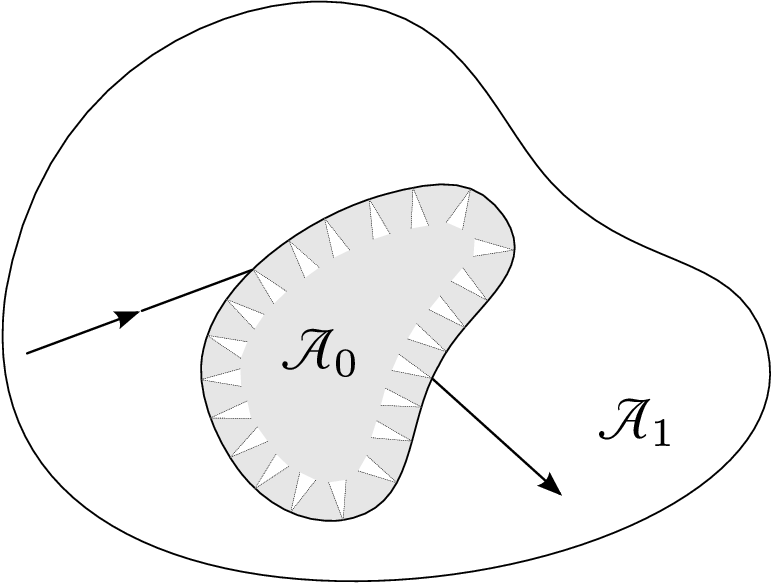}\ \ 
\caption{{\small  A deep well billiard system and the field of thin cones. When the potential in $\mathcal{A}_0$ is strongly negative, a trajectory that falls into $\mathcal{A}_0$  undergoes 
many collisions  with   $\partial \mathcal{A}_0$ during a   time interval having mean value $\pi A/a$
 until the first return to  $\mathcal{C}$  for which the direction of
approach lies  in the field of thin cones defined by $|\sin\theta|< \sqrt{\frac{1}{1+|C|}}$. Here $\theta$ is the angle the velocity of incidence makes with a   normal vector to $\mathcal{C}$ (pointing into $\mathcal{A}_1$) at the collision point.
The trajectory then reemerges into $\mathcal{A}_1$ at a point in $\mathcal{C}$ and velocity   having  
the Liouville measure distribution. 
}}
\label{DeepWell}
\end{center}
\end{figure}

 \appendix
 \section{Appendices}
 \subsection{Motion under discontinuous potential}\label{Ap_A}
In order to justify on physical grounds our definition of the lensed billiard map, 
it will be useful to see how trajectories under discontinuous potentials arise in the  limit of a family of   smooth potential functions with increasingly sharp transition between two constant values. We do this in the setting of Riemannian manifolds  of arbitrary dimension.

The following considerations will be local  in nature. Suppose  that the potential  function $V$, restricted to a neighborhood $\mathcal{U}$ in  the Riemannian manifold $M$ with a smooth metric $\langle\cdot,\cdot\rangle$, has only two values, $C_0$ and $C_1$. The discontinuity of $V$ lies on a smooth hypersurface $\mathcal{S}$, and  $\mathcal{U}\setminus \mathcal{S}$
is the union of open sets $\mathcal{U}_0$ and $\mathcal{U}_1$ such that  $V|_{\mathcal{U}_i}=C_i$.  We make the assumption that  $C_1>C_0$. The discussion in this section applies to the opposite inequality with minor modifications.
Let $\mathbf{n}$ be a unit vector field on $\mathcal{S}$, perpendicular to $\mathcal{S}$, and pointing into $\mathcal{U}_1$. Define for $\epsilon>0$ the set
$$ \mathcal{R}_\epsilon:=\{\exp_{x}(s \mathbf{n}_x): x\in \mathcal{S}\cap \mathcal{U}, s\in [0,\epsilon]\}\cap\overline{\mathcal{U}}_1.$$
Then $\mathcal{R}_\epsilon$ is the union of submanifolds $\mathcal{S}_\epsilon(s)$ consisting of the points $\exp_{x}(s \mathbf{n}_x)$ in $\mathcal{R}_\epsilon$ 
in which $x\in \mathcal{S}$ and  $s$ is constant. Note that $\mathcal{S}_\epsilon(0)=\mathcal{S}$. 
The vector field $\mathbf{n}$ can be extended to all of $\mathcal{R}_\epsilon$
by setting $\mathbf{n}_{\gamma(s)}=\gamma'(s)$ where $\gamma(s)=\exp_x(s\mathbf{n}_x)$ and $x\in \mathcal{S}$. It is not difficult to show (this is essentially Gauss's lemma, \cite{dC})
that $\mathbf{n}_y$ is perpendicular to $\mathcal{S}_\epsilon(s)$ at any $y\in \mathcal{S}_\epsilon(s)$. Clearly $D_{\mathbf{n}}\mathbf{n}=0$ since the integral curves of $\mathbf{n}$ are geodesics. (Here $D$ denotes the Levi-Civita connection.)

We  define a smooth potential function $V_\epsilon$ on $\mathcal{U}$ as follows. Let $f_\epsilon:\mathbb{R}\rightarrow [C_0, C_1]$ be a smooth real-valued increasing function such that 
$$f_\epsilon(s)=\begin{cases}
C_0 & \text{ if } s\leq 0\\
C_1 & \text{ if } s\geq \epsilon.
\end{cases}
$$
 Now set  
\begin{equation}\label{potential_eps}
V_\epsilon(x)=\begin{cases}
 C_0 &\text{ if } x\in \mathcal{U}_0\\
 C_1 &\text{ if } x\in \mathcal{U}_1\setminus\mathcal{R}_\epsilon\\
 f_\epsilon(s) & \text{ if } x \in \mathcal{S}_{\epsilon}(s).
 \end{cases}
\end{equation}

Let $\mathbb{S}_x$ denote the shape operator of the level hypersurfaces $\mathcal{S}_\epsilon(s)$. Thus, by definition,
$$\mathbb{S}_xv=-D_v\mathbf{n}$$
for all $v\in T_x\mathcal{S}_\epsilon(s)$. 
Finally, if $v\in T_x\mathcal{R}_\epsilon$, the orthogonal decomposition of $v$ into
a tangent vector $v_\tau$ to the hypersurface  $\mathcal{S}_\epsilon(s)$ containing $x$ and a perpendicular vector will be written as
$v=v_\tau+v_n\mathbf{n}_x$ 

\begin{lemma}\label{NewtonSplit}
Newton's equation in $\mathcal{R}_\epsilon$ with potential function $V_\epsilon$ as defined in Equation (\ref{potential_eps}) decomposes orthogonally as
\begin{align*}
\frac{Dv_\tau}{dt} &=\langle v_\tau, \mathbb{S}_x v_\tau \rangle \mathbf{n} + v_n \mathbb{S}_x v_\tau\\
-\frac{d{v}_n}{dt} &= \langle v_\tau, \mathbb{S}_x v_\tau \rangle +\frac1m f_\epsilon'(s(x)).
\end{align*}
\end{lemma}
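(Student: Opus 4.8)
The plan is to start from Newton's equation $m\,\tfrac{D v}{dt} = -\operatorname{grad} V_\epsilon$ inside $\mathcal{R}_\epsilon$ and split both sides into the component tangent to the level hypersurface $\mathcal{S}_\epsilon(s)$ through $x$ and the component along $\mathbf{n}$. First I would compute the right-hand side: since $V_\epsilon(x) = f_\epsilon(s(x))$ on $\mathcal{R}_\epsilon$ and $s$ is exactly the arclength parameter along the unit geodesic field $\mathbf{n}$ (so $\operatorname{grad} s = \mathbf{n}$ by Gauss's lemma, as recalled in the excerpt), we get $\operatorname{grad} V_\epsilon = f_\epsilon'(s)\,\mathbf{n}$. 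Hence the force is purely normal, $-\operatorname{grad}V_\epsilon = -f_\epsilon'(s)\,\mathbf{n}$, and the whole content of the lemma is in expanding the acceleration $\tfrac{D v}{dt}$ with $v = v_\tau + v_n \mathbf{n}$.

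Next I would differentiate $v = v_\tau + v_n \mathbf{n}$ covariantly along the trajectory. Using the Leibniz rule, $\tfrac{D v}{dt} = \tfrac{D v_\tau}{dt} + \tfrac{d v_n}{dt}\,\mathbf{n} + v_n\,\tfrac{D \mathbf{n}}{dt}$. The term $\tfrac{D\mathbf{n}}{dt}$ is evaluated by writing $\dot x = v_\tau + v_n \mathbf{n}$ and using $D_{\mathbf{n}}\mathbf{n} = 0$ (the integral curves of $\mathbf{n}$ are geodesics) together with the definition of the shape operator $\mathbb{S}_x w = -D_w \mathbf{n}$ for $w$ tangent to $\mathcal{S}_\epsilon(s)$; this gives $\tfrac{D\mathbf{n}}{dt} = D_{v_\tau}\mathbf{n} + v_n D_{\mathbf{n}}\mathbf{n} = -\mathbb{S}_x v_\tau$. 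So the normal force balances $m\big(\tfrac{d v_n}{dt}\,\mathbf{n} + (\text{normal part of }\tfrac{D v_\tau}{dt}) - v_n\mathbb{S}_x v_\tau\big)$ plus the tangential part of $\tfrac{D v_\tau}{dt}$. The remaining point is to identify the normal part of $\tfrac{D v_\tau}{dt}$: since $v_\tau$ stays tangent to the (moving) hypersurface along the curve, differentiating the relation $\langle v_\tau, \mathbf{n}\rangle = 0$ yields $\langle \tfrac{D v_\tau}{dt}, \mathbf{n}\rangle = -\langle v_\tau, \tfrac{D\mathbf{n}}{dt}\rangle = \langle v_\tau, \mathbb{S}_x v_\tau\rangle$, using the expression for $\tfrac{D\mathbf{n}}{dt}$ just obtained (the $\langle v_\tau, v_n \mathbb{S}_x v_\tau \mathbf{n}\rangle$-type cross terms vanish because $\mathbb{S}_x v_\tau$ is again tangent to the hypersurface). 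Collecting terms: the normal component of Newton's equation reads $m\big(\tfrac{d v_n}{dt} + \langle v_\tau, \mathbb{S}_x v_\tau\rangle\big) = -f_\epsilon'(s)$, i.e. $-\tfrac{d v_n}{dt} = \langle v_\tau, \mathbb{S}_x v_\tau\rangle + \tfrac1m f_\epsilon'(s)$, and the tangential component reads $\tfrac{D v_\tau}{dt} - \langle v_\tau, \mathbb{S}_x v_\tau\rangle \mathbf{n} - v_n \mathbb{S}_x v_\tau = 0$, which rearranges to the first displayed equation of the lemma.

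I expect the main obstacle to be bookkeeping of the covariant derivatives, specifically making precise that $\mathbb{S}_x$ — a priori the shape operator of a single hypersurface $\mathcal{S}_\epsilon(s)$ — behaves consistently as $x$ moves transversally between different level sets, and justifying that $D_{v_\tau}\mathbf{n} = -\mathbb{S}_x v_\tau$ holds with $v_\tau$ regarded as a vector tangent to whichever hypersurface contains the current point. This is the standard fact that $w \mapsto -D_w \mathbf{n}$ is a well-defined symmetric operator on each $T_x\mathcal{S}_\epsilon(s)$ because $\mathbf{n}$ is a unit normal field to the whole foliation $\{\mathcal{S}_\epsilon(s)\}$ and $D_{\mathbf{n}}\mathbf{n}=0$; once that is invoked, everything else is the routine Leibniz-rule computation sketched above. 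It is also worth noting the decomposition is genuinely orthogonal since $\mathbb{S}_x v_\tau \in T_x\mathcal{S}_\epsilon(s)$, so the $\mathbf{n}$-component and the tangential component of each equation decouple cleanly.
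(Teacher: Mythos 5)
Your proposal is correct and follows essentially the same route as the paper's own proof: expand $v=v_\tau+v_n\mathbf{n}$ by the Leibniz rule, use $D_{\mathbf{n}}\mathbf{n}=0$ and the shape operator to get $\tfrac{D\mathbf{n}}{dt}=-\mathbb{S}_xv_\tau$, extract the normal part of $\tfrac{Dv_\tau}{dt}$ from $\langle v_\tau,\mathbf{n}\rangle=0$, and split Newton's equation with $\operatorname{grad}V_\epsilon=f_\epsilon'(s)\,\mathbf{n}$ into normal and tangential components. No gaps; the foliation/well-definedness point you flag is exactly the standard fact the paper invokes via Gauss's lemma.
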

\begin{proof}
Since $D_{\mathbf{n}}\mathbf{n}=0$, we have 
$$\frac{D \mathbf{n}}{dt} = D_{v} \mathbf{n} =  D_{v_\tau} \mathbf{n}  + v_n D_{\mathbf{n}}\mathbf{n} =-\mathbb{S}_xv_\tau.  $$
$$\left\langle \frac{Dv_\tau}{dt},\mathbf{n}\right\rangle = -\langle v_\tau, D_v\mathbf{n}\rangle = \langle v_\tau, \mathbb{S}_x v_\tau\rangle. $$
Let $\Pi_x$ denote the orthogonal projection to the tangent space at $x$ to the hypersurface $\mathcal{S}_\epsilon(s)$ containing $x$. Then
\begin{align*}
-f_\epsilon'(s)\mathbf{n}&=-\text{grad }V\\
&=m\frac{Dv}{dt}\\
&=m\left(\frac{Dv_\tau}{dt} +\dot{v}_n\mathbf{n} - v_n\mathbb{S}_x v_\tau\right)\\
&=m\left(\Pi\frac{Dv_\tau}{dt} +\langle v_\tau, \mathbb{S}_x v_\tau\rangle \mathbf{n} +\dot{v}_n\mathbf{n} - v_n\mathbb{S}_x v_\tau\right).
\end{align*}
Separating the normal and tangential parts we obtain  $\Pi{Dv_\tau}/{dt} =v_n\mathbb{S}_x v_\tau$ and the desired equations.
\end{proof}

\begin{lemma} We make   the same assumptions as in Lemma \ref{NewtonSplit},  except that now we require $V$ to be constant equal to $C_0$ on $\mathcal{U}_0$
and constant equal to $C_1$ on $\mathcal{U}_1\setminus\mathcal{R}_\epsilon$.  
Let the initial velocity of a particle that enters $\mathcal{R}_\epsilon$ from $\mathcal{U}_0$ be $v(0)=v^-$ and let the velocity upon exit from $\mathcal{R}_\epsilon$ be
 $v^+$ for the discontinuous potential function $V$ and $v_\epsilon^+$ for the potential $V_\epsilon$. Further assume that the shape operator of the hypersurface $\mathcal{S}$ is bounded.  Then $v^+=v^+_\epsilon +O(\epsilon)$ and the return time to the boundary of $\mathcal{R}_\epsilon$ is $T_\epsilon=O(\epsilon)$. 
\end{lemma}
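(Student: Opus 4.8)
The plan is to track the trajectory of the smoothed potential $V_\epsilon$ through the thin slab $\mathcal{R}_\epsilon$ using just two ingredients: energy conservation $\frac m2\|v\|^2+V_\epsilon=E$, and the orthogonal splitting of Lemma~\ref{NewtonSplit}. The key point I would extract is that the tangential speed $\|v_\tau\|$ is \emph{almost} a constant of the motion across $\mathcal{R}_\epsilon$, so that the normal speed $v_n$ is governed, up to an $O(\epsilon)$ error, by the one--dimensional energy balance at a potential step --- which is precisely the Snell relation of Proposition~\ref{refl_refr}. As an auxiliary variable I would use the signed normal distance $s=s(x)$ to $\mathcal{S}$ (well defined on $\mathcal{R}_\epsilon$ for $\epsilon$ small, with $\mathcal{S}_\epsilon(s)=\{s(x)=s\}$); by Gauss's lemma $\text{grad }s=\mathbf n$, so $\dot s=\langle v,\mathbf n\rangle=v_n$ along the motion, and while the particle lies in $\mathcal{R}_\epsilon$ one has $0\le s\le\epsilon$.

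First I would record uniform bounds. Energy conservation and $V_\epsilon\ge C_0$ give $\|v\|\le v_{\max}:=\sqrt{2(E-C_0)/m}$ on $\mathcal{R}_\epsilon$. Since the statement is local we may take the ambient region to have compact closure, so the curvature of $M$ is bounded there; combined with the hypothesis that $\mathbb{S}$ is bounded on $\mathcal{S}$, the Riccati equation along the normal geodesics furnishes a uniform bound $\|\mathbb{S}\|\le K$ for the shape operators of all the level hypersurfaces $\mathcal{S}_\epsilon(s)$, $0\le s\le\epsilon$, once $\epsilon$ is small. Then, from Lemma~\ref{NewtonSplit} and $v_\tau\perp\mathbf n$,
\[
\frac{d}{dt}\,\frac12\|v_\tau\|^2=\Big\langle v_\tau,\frac{Dv_\tau}{dt}\Big\rangle=v_n\,\langle v_\tau,\mathbb{S}v_\tau\rangle ,
\]
so wherever $v_n\ne0$ one may pass to $s$ as the independent variable --- the factor $v_n$ cancels --- giving $\frac{d}{ds}\|v_\tau\|^2=2\langle v_\tau,\mathbb{S}v_\tau\rangle$ and hence $\bigl|\,\frac{d}{ds}\|v_\tau\|^2\,\bigr|\le 2Kv_{\max}^2$.

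The heart of the argument is then a bootstrap in the refraction regime $(v_n^-)^2-\frac2m(C_1-C_0)>0$, equivalently $\frac2m(E-C_1)-\|v_\tau^-\|^2>0$. On the largest interval $[0,s_0)$ over which the trajectory keeps $v_n>0$, the last bound gives $\bigl|\,\|v_\tau(s)\|^2-\|v_\tau^-\|^2\,\bigr|\le 2Kv_{\max}^2\,s\le 2Kv_{\max}^2\,\epsilon$, so that energy conservation $v_n^2=\frac2m(E-V_\epsilon)-\|v_\tau\|^2$ with $V_\epsilon=f_\epsilon(s)\in[C_0,C_1]$ yields
\[
v_n^2(s)=\frac2m\bigl(E-f_\epsilon(s)\bigr)-\|v_\tau^-\|^2+O(\epsilon)\ \ge\ \frac2m(E-C_1)-\|v_\tau^-\|^2+O(\epsilon)\ \ge\ \delta>0
\]
for $\epsilon$ small, with $\delta$ half the refraction gap. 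Thus $v_n\ge\sqrt\delta$ on $[0,s_0)$, which forces $s_0=\epsilon$: the trajectory traverses the whole slab and leaves through $\mathcal{S}_\epsilon(\epsilon)$ (it cannot turn back through $\mathcal{S}$, since $s$ is strictly increasing). Hence the return time is $T_\epsilon=\int_0^\epsilon ds/v_n\le\epsilon/\sqrt\delta=O(\epsilon)$. At the exit, $f_\epsilon(\epsilon)=C_1$ and Lipschitz continuity of the square root away from $0$ give $v_n^+=\bigl[(v_n^-)^2-\frac2m(C_1-C_0)\bigr]^{1/2}+O(\epsilon)$ and $\|v_\tau^+\|=\|v_\tau^-\|+O(\epsilon)$; comparing with the Snell values of Proposition~\ref{refl_refr}(2), and identifying the orthonormal normal/tangent frame at the entry point on $\mathcal{S}$ with that at the exit point on $\mathcal{S}_\epsilon(\epsilon)$ (the two points lie on a single normal geodesic of length $\epsilon$, so the frames agree up to $O(\epsilon)$), we conclude $v_\epsilon^+=v^++O(\epsilon)$. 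The reflection regime $(v_n^-)^2-\frac2m(C_1-C_0)<0$ is treated in the same spirit: $v_n$ reaches $0$ at a turning depth $s_\ast<\epsilon$ with $f_\epsilon(s_\ast)=E-\frac m2\|v_\tau^-\|^2+O(\epsilon)$, the in--and--out excursion integral $\int ds/|v_n|$ still converges and is $O(\epsilon)$, and the exit velocity matches the specular reflection of $v^-$ of Proposition~\ref{refl_refr}(1) up to $O(\epsilon)$. (The borderline critical--angle case $(v_n^-)^2=\frac2m(C_1-C_0)$ is excluded from a clean $O(\epsilon)$ statement.)

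The step I expect to be the main obstacle is exactly this bootstrap --- ruling out that $v_n$ vanishes inside $\mathcal{R}_\epsilon$ in the refraction regime. Without it one cannot guarantee that the trajectory crosses rather than grazing $\mathcal{S}$, cannot convert $s$ into a usable substitute for time in the $T_\epsilon=O(\epsilon)$ bound, and cannot close the estimate on $v_n^+$. The remaining pieces --- the Riccati bound on the shape operators of the level sets, and the $O(\epsilon)$ comparison of the two frames --- are routine given bounded geometry on the localized region.
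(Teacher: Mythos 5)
Your proposal is correct and follows essentially the same route as the paper: both arguments rest on Lemma \ref{NewtonSplit}, energy conservation, and the bounded shape operator to show that in the refraction regime $v_n$ stays bounded below by a constant determined by the refraction gap, so the slab is traversed in time $T_\epsilon=O(\epsilon)$ and the exit velocity agrees with the Snell values of Proposition \ref{refl_refr} up to $O(\epsilon)$, with the reflection case deferred in both treatments. The differences are only bookkeeping: you reparametrize by the normal distance $s$ and invoke full energy conservation, whereas the paper integrates the normal equation in time (using $f'_\epsilon(s)v_n\,dt=d\bigl(f_\epsilon(s(t))\bigr)$) to get a linear-in-$t$ lower bound on $v_n^2$, and your frame-identification remark plays the role of the paper's parenthetical about parallel frames along the trajectory.
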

\begin{proof}
We consider the case $C_1>C_0$. The opposite inequality can be argued similarly. Thus $V_\epsilon$ is increasing along the radial direction (parallel to the vector field $\mathbf{n}$) in $\mathcal{R}_\epsilon$. Energy conservation implies 
$$\|v(t)\|=\sqrt{2({E}-V_\epsilon(x(t)))/m}\leq \|v^-\|$$
for all $t$. We are only concerned with the trajectory $x(t)$ from $t=0$ to $t=T_\epsilon$, when it reaches again the boundary of $\mathcal{R}_\epsilon$ at a point where $V_\epsilon=C_1$. (The case in which the trajectory does not overcome the potential barrier and returns to the boundary of $\mathcal{U}_0$ can be dealt with by similar arguments.)
Thus we know that 
$$B:=\frac{(v_n^-)^2 -\frac2m(C_1-C_0)}{4 \|v^-\|^3 K}>0 $$
where $K$ is an upper bound on the norm of the shape operator. By Lemma \ref{NewtonSplit},
$$\frac{d}{dt}\left(\frac12 m v_n^2\right)= v_n \left[-m\langle v_\tau, \mathbb{S}_x v_\tau\rangle - f'_\epsilon(s(t))\right].$$
Note that $v_n=\dot{s}$, so $ f'_\epsilon(s(t))v_n\, dt=d(f_\epsilon(s(t))).$ Integrating    in $t$ gives
\begin{align*}
v_n^2(t) &= v_n^2(0) -\frac2m\left[f(s(t))-C_0\right] -2\int_0^t v_n(u) \langle v_\tau(u),\mathbb{S}_{x(u)}v_\tau(u)\rangle\, du\\
&\geq v_n^2(0) -  \frac2m(C_1-C_0)-2\|v^-\|^3 K t\\
&= \frac{\left(v_n^-\right)^2 -\frac2m (C_1-C_0)}{4\|v^-\|^3K} 4\|v^-\|^3 K-2\|v^-\|^3 Kt\\
&=2\|v^-\|^3 K(2B-t). 
\end{align*}
This quantity is bounded away from $0$ for $t\in [0,B]$. 
Explicitly,
$$v_n(t)\geq \sqrt{2\|v^-\|^3 KB} $$
in that interval.
 Then
$$\int_0^B v_n(t)\, dt \geq 2\sqrt{\|v^-\|^3 KB} \int_0^B \sqrt{1-\frac{t}{2B}}\, dt=\frac83\left(1-\frac1{2\sqrt{2}}\right)\sqrt{\|v^-\|^3 K B^3}=:A>0. $$
Setting $\epsilon <A$, we can be certain that the trajectory $x(t)$ will reach the boundary of $\mathcal{R}_\epsilon$ (on the side of $\mathcal{U}_1$)
at $s=\epsilon$ in time $T_\epsilon\leq B$ and that
$$\epsilon = \int_0^{T_\epsilon} v_n(t)\, dt \geq \sqrt{2\|v^-\|^3 K B} T_\epsilon. $$
This shows that  
$$T_\epsilon \leq \frac{\epsilon}{\sqrt{2\|v^-\|^3 K B}}=O(\epsilon). $$
It is now a simple consequence of Lemma \ref{NewtonSplit} that
\begin{equation}\label{eqapp}v_\tau(T_\epsilon) = v_\tau(0)+O(\epsilon), \ \ v_n(T_\epsilon) = \sqrt{v_n(0)-\frac2m(C_1-C_0)} +O(\epsilon). \end{equation}
Therefore $v^+=v^+_\epsilon +O(\epsilon)$ as claimed.
(More properly, one may write the first equation in Lemma \ref{NewtonSplit} as a system of first order non-linear equations in the components of $v_\tau$ with respect to an orthonormal parallel frame of vector fields
along a trajectory $x(t)$. The approximation given above in (\ref{eqapp}) is easily shown to hold for these components.)
\end{proof}

\subsection{Invariant lensed billiard measure}\label{Ap_B}
Let us begin by recalling a basic fact about the invariance of the canonical (Liouville) measure under the billiard map.
Suppose $\mathcal{R}$ is a Riemannian manifold of dimension $n$ with piecewise smooth boundary and $V:\mathcal{R}\rightarrow \mathbb{R}$ is a potential function which  is bounded from above. We assume for the moment that $V$ is smooth and let $\mathcal{T}$ be the billiard map on the phase space $\mathcal{V}$. (When $V$ is smooth,  the set $\mathcal{A}$ plays no role and 
 $\mathcal{R}$ and $\mathcal{B}$ are the same.) We fix a value $E$ for the total energy such that $E-V$ is bounded from below by a positive number and let $\mathcal{V}_E\subseteq \mathcal{V}$ denote the level set of the total energy function for the value $E$. Thus 
$$\mathcal{V}_E:=\left\{(x,v)\in \mathcal{V}:x\in \partial\mathcal{R},  \|v\|=\sqrt{2(E-V(x))/m} \right\}. $$
The {\em canonical billiard measure}, or  {\em Liouville measure}, is the measure on $\mathcal{V}_E$  obtained by the restriction to $\mathcal{V}_E$ of the 
volume form derived from the symplectic form on the tangent bundle of $\mathcal{R}$. Let us denote this measure by $\nu$. It is a classical result that this measure is invariant under the billiard map.

More generally, if $\mathcal{S}$ is a codimension $1$ submanifold in the interior of $\mathcal{R}$, we can redefine $\mathcal{T}$ so that
it is the map
defined over the union of the boundary of $\mathcal{R}$ and $\mathcal{S}$ and, upon reaching $\mathcal{S}$, the velocity of the billiard flow is not reflected. We have in mind to apply this fact to $\mathcal{S}$ that are level sets of $V$. 

\begin{figure}[htbp]
\begin{center}
\includegraphics[width=2in]{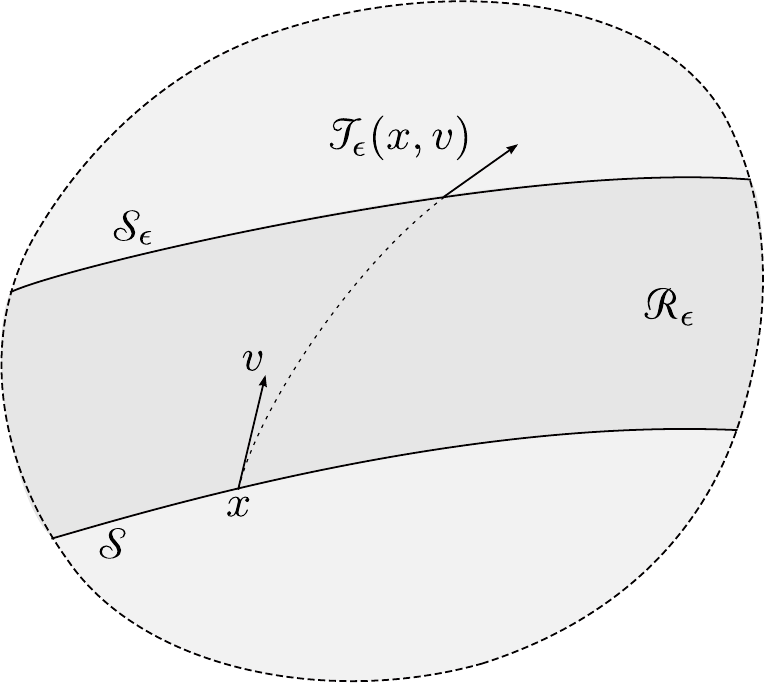}\ \ 
\caption{{\small  Set-up for Theorem \ref{invariance}. The submanifold $\mathcal{S}$ is a hypersurface of discontinuity of the potential function.  When $\epsilon$ approaches $0$, $\mathcal{S}_\epsilon$ limits to a piece of the boundary of    $\mathcal{A}_1$  while $\mathcal{S}$ is identified with the corresponding boundary piece of $\mathcal{A}_0$.}}
\label{m_invariance}
\end{center}
\end{figure}

Let $(x,v)\in \mathcal{V}_E$ (a point in the phase-space of the lensed billiard) be such that $x$ is a regular point  in a smooth boundary piece $\mathcal{S}$ of $\mathcal{A}_0$ and  $\langle \mathbf{n}_x, v\rangle<0$ where $\mathbf{n}$ is the normal vector field on $\mathcal{S}$ pointing to the interior of $\mathcal{A}_0$. Thus $(x,v)$ represents the state of a billiard particle arriving at a point of discontinuity of the potential.  We suppose that the normal component of $v$  is sufficiently large for the billiard flow trajectory to overcome the
potential barrier and cross into $\mathcal{A}_1$. Let $\pi:\mathcal{V}_E\rightarrow \mathcal{B}$ be the base-point projection and $\mathcal{U}$ 
a neighborhood of $(x,v)$  such that $\mathcal{U}\cap\pi^{-1}(\mathcal{S})$  consists  of regular points and velocities leading to refraction. 

Our goal is to show that
the lensed billiard map preserves the canonical measure $\nu$. For this it is enough   to show that the refraction operation $(x,v^-)\mapsto (x,v^+)$ itself preserves the measure.
It will be convenient to denote this operation   by $\mathcal{T}$ even though it does not involve the displacement which, together with the reflection or refraction, makes up the  billiard map as defined in Section \ref{lb}. Let $C_i$ be the value of the potential function on $\mathcal{A}_i$. As in Appendix  \ref{Ap_A},  we approximate the potential step  
by a smooth transition $V_\epsilon$ over a narrow band $\mathcal{R}_\epsilon\subset \mathcal{A}_1$ bounded   by $\pi(\mathcal{U})\cap \mathcal{S}$ and $\pi(\mathcal{U})\cap\mathcal{S}_\epsilon$. This is indicated in Figure \ref{m_invariance}. Thus $V_\epsilon$ has value $C_0$ on $\pi(\mathcal{U})\cap\mathcal{S}$ and $C_1$ on $\pi(\mathcal{U})\cap\mathcal{S}_\epsilon$,
and it is defined by means of a function $f_\epsilon$ according to the construction used in Appendix \ref{Ap_A}. Let   $$\mathcal{T}_\epsilon:\widetilde{\mathcal{S}}:=\mathcal{U}\cap\pi^{-1}(\mathcal{S})\rightarrow \widetilde{\mathcal{S}}_\epsilon:=\mathcal{U}\cap\pi^{-1}(\mathcal{S}_\epsilon)$$ 
be the map induced by the Hamiltonian flow. (See Figure \ref{m_invariance}.)  Note that   $\mathcal{S}_\epsilon$   limits to $\mathcal{S}$ in $\mathcal{R}$ while in $\mathcal{B}$ it
limits to the subset of the boundary of $\mathcal{A}_1$ corresponding to $\mathcal{S}$ as $\epsilon$ approaches $0$; and 
 $\mathcal{S}$ should be viewed as contained in the boundary of $\mathcal{A}_0$. Thus in the limit we distinguish $\widetilde{\mathcal{S}}^-:=\widetilde{\mathcal{S}}$, contained in the domain of $\mathcal{T}$ and $\widetilde{\mathcal{S}}^+:=\lim_{\epsilon\rightarrow 0}\widetilde{\mathcal{S}}_\epsilon$, contained in the range of $\mathcal{T}$.
Finally, let $\nu_\epsilon$ denote the canonical measure on $\widetilde{\mathcal{S}}_\epsilon$. Note that measure invariance under the Hamiltonian flow implies ${\mathcal{T}_\epsilon}_*\nu=\nu_\epsilon$, where ${\mathcal{T}_\epsilon}_*$ is the   push-forward operation on measures. 

With these notations, proving invariance of the canonical measure under the lensed billiard map reduces to showing the following. For any given  continuous real-valued function $\varphi$ with compact support in $\mathcal{U}$, 
$$
\int_{\widetilde{\mathcal{S}}_\epsilon} \varphi \, d\nu_\epsilon =\int_{\widetilde{\mathcal{S}}_\epsilon} \varphi \, d({\mathcal{T}_\epsilon}_*\nu)
=\int_{\widetilde{\mathcal{S}}} \varphi\circ \mathcal{T}_\epsilon\, d\nu\rightarrow \int_{\widetilde{\mathcal{S}}} \varphi\circ \mathcal{T}\, d\nu=\int_{\widetilde{\mathcal{S}}^-} \varphi\circ \mathcal{T}\, d\nu.
$$
as $\epsilon\rightarrow 0$. On the other hand, the left-most  term above converges to 
$$\int_{\widetilde{\mathcal{S}}_\epsilon} \varphi \, d\nu_\epsilon\rightarrow  \int_{\widetilde{\mathcal{S}}^+} \varphi \, d\nu.$$
Therefore
$$\int_{\widetilde{\mathcal{S}}^-} \varphi\circ \mathcal{T}\, d\nu= \int_{\widetilde{\mathcal{S}}^+} \varphi \, d\nu.$$
This implies invariance of the canonical measure under refraction.  We thus arrive at the following.

\begin{theorem}\label{invariance}
The lensed billiard map preserves the Liouville measure.
\end{theorem}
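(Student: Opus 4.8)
The plan is to reduce the statement to the two operations that make up the lensed billiard map and handle each separately. The lensed billiard map $\mathcal{T}$ decomposes, on each piece of the phase space, as a geodesic displacement followed by either a reflection off a reflecting-boundary component or a refraction across a refracting-boundary component. The geodesic flow (Hamiltonian flow for the energy $E$) preserves the canonical Liouville measure $\nu$ on the energy surface $\mathcal{V}_E$ by Liouville's theorem, and specular reflection off a fixed smooth hypersurface is the classical case already recalled in Section~\ref{liouville}. So the only new ingredient is that the \emph{refraction operation} $(x,v^-)\mapsto(x,v^+)$, acting between the two copies $\widetilde{\mathcal{S}}^-\subset\partial\mathcal{A}_0$ and $\widetilde{\mathcal{S}}^+\subset\partial\mathcal{A}_1$ of a discontinuity hypersurface, preserves $\nu$. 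Once that is established, $\mathcal{T}$ is a composition of measure-preserving maps and hence itself measure-preserving on its (full-measure, smooth) domain.

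The core step — the main obstacle — is the invariance of the refraction map, and the approach is the approximation argument already set up in the paragraphs preceding the theorem. One replaces the sharp potential jump across $\mathcal{S}$ by a smooth transition $V_\epsilon$ supported on the thin shell $\mathcal{R}_\epsilon\subset\mathcal{A}_1$, as in Section~\ref{disc}. The Hamiltonian flow of $V_\epsilon$ is smooth and volume-preserving, so the induced Poincar\'e map $\mathcal{T}_\epsilon:\widetilde{\mathcal{S}}\to\widetilde{\mathcal{S}}_\epsilon$ between the two bounding hypersurfaces satisfies $(\mathcal{T}_\epsilon)_*\nu=\nu_\epsilon$ exactly, for every $\epsilon>0$. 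The point is then to pass to the limit $\epsilon\to0$: by the Lemma at the end of Section~\ref{disc}, the transit time through $\mathcal{R}_\epsilon$ is $O(\epsilon)$ and $v^+_\epsilon = v^+ + O(\epsilon)$, so $\mathcal{T}_\epsilon \to \mathcal{T}$ pointwise (indeed locally uniformly) on the relevant open set of states leading to refraction; moreover $\widetilde{\mathcal{S}}_\epsilon\to\widetilde{\mathcal{S}}^+$ and $\nu_\epsilon\to\nu$ on $\widetilde{\mathcal{S}}^+$. Testing against an arbitrary $\varphi\in C_c(\mathcal{U})$ and using dominated convergence on each side of $\int_{\widetilde{\mathcal{S}}}\varphi\circ\mathcal{T}_\epsilon\,d\nu = \int_{\widetilde{\mathcal{S}}_\epsilon}\varphi\,d\nu_\epsilon$ yields $\int_{\widetilde{\mathcal{S}}^-}\varphi\circ\mathcal{T}\,d\nu = \int_{\widetilde{\mathcal{S}}^+}\varphi\,d\nu$, which is exactly invariance of $\nu$ under refraction.

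I expect the delicate bookkeeping to be (i) making precise that "the canonical measure $\nu$" on the moving hypersurfaces $\mathcal{S}_\epsilon(s)$ depends continuously on $\epsilon$ and converges to $\nu$ on the limiting hypersurface — this is a statement about the restriction of the symplectic volume form and follows because $\mathcal{S}_\epsilon(s)$ is obtained from $\mathcal{S}$ by the normal exponential flow, which is smooth in $s$ — and (ii) checking that the approximating flow does not develop pathologies, i.e.\ that for $\epsilon$ small and $(x,v^-)$ in the open refraction region, the $V_\epsilon$-trajectory genuinely crosses $\mathcal{R}_\epsilon$ and exits through $\mathcal{S}_\epsilon$ rather than turning back; this is guaranteed by the quantitative lower bound on $v_n(t)$ in the Lemma of Section~\ref{disc}, provided $\epsilon$ is taken below the constant $A$ there. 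One should also note the reduction is local: the boundary of $\mathcal{B}$ is covered by finitely many charts, on each of which $\mathcal{T}$ is one of the three elementary operations, and measure invariance is a local statement, so it suffices to argue on a neighborhood $\mathcal{U}$ of a single refracting state, exactly as done. Finally, the remaining (reflecting-boundary and corner-avoiding) part of the argument is verbatim the classical billiard computation and needs no modification beyond remarking that the speed normalization $\|v\|=\sqrt{2(E-C_i)/m}$ is the one consistent with the energy surface $\mathcal{V}_E$.
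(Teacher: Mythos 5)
Your proposal is correct and follows essentially the same route as the paper: reduce invariance to the refraction operation, approximate the potential jump by the smooth $V_\epsilon$ of Section \ref{disc}, use exact invariance $({\mathcal{T}_\epsilon})_*\nu=\nu_\epsilon$ for the Hamiltonian flow, and pass to the limit against test functions to obtain $\int_{\widetilde{\mathcal{S}}^-}\varphi\circ\mathcal{T}\,d\nu=\int_{\widetilde{\mathcal{S}}^+}\varphi\,d\nu$. Your added remarks on the convergence $\nu_\epsilon\to\nu$ and on trajectories genuinely crossing $\mathcal{R}_\epsilon$ make explicit points the paper leaves implicit, but the argument is the same.
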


The invariant measure can be given the following concrete form. (See \cite{cook}.) Let $\mu$ denote the Riemannian volume measure on the boundary of $\mathcal{B}$ and 
$\sigma_x$ the Riemannian measure on the  hemisphere   $$\{v\in T_x\mathcal{B}:x \text{ is a regular point in } \partial \mathcal{B}, \|v\|=1, \langle \mathbf{n}_x,v\rangle\geq 0\}.$$
Then, up to multiplicative constant,
$$d\nu(x,v) =\left[2(E-V(x))\right]^{\frac{n-1}{2}} \left\langle \mathbf{n}_x,{v}/{\|v\|}\right\rangle\,  d\sigma_x(v/\|v\|)d\mu(x). $$
In particular, the  distribution of directions has density (relative to the Riemannian volume on the unit hemisphere)  given by the cosine of the angle $\theta$ between
$\mathbf{n}_x$ and $v$.

\begin{proof}[Proof of Proposition \ref{crosscos}] In this general Riemannian setting, Proposition \ref{crosscos} can be proved as follows.
Let $S^+_x$ be the hemisphere centered at $x$ consisting of unit vectors $u$ such that $\langle \mathbf{n}_x,u\rangle\geq 0$, and let $D$ be the unit disc in the
tangent space $T_x\mathcal{C}$.  Let $(e_1, \dots, e_{n-1})$ be an orthonormal basis of $T_x\mathcal{C}$ and let $y=(y_1,\dots,y_{n-1})$ denote coordinates
of points in $D$ relative to this basis.  If we parametrize $S^+_x$ using these coordinates, the volume element $dV$ on the hemisphere satisfies  $ dV(u)= (\cos\varphi)^{-1}dy_1\dots dy_{n-1}$
where $\varphi$ is the angle between $u\in S^+_{x}$ and $\mathbf{n}_x$.  In other words, the cosine law corresponds to  uniform distribution  in $D$. 
The set of  trajectory segments arriving at $x$ from $\mathcal{A}_1$ whose velocities  undergo refraction are those for which $|\sin \theta|<r_0$.  Under the given parametrization 
of $S^+_x$ by $D$, this set defines the open  disc  $D_0\subseteq D$ centered at the origin of $T_x\mathcal{C}$ with radius  $r_0$. Under the 
cosine law,  velocities undergoing refraction correspond to points uniformly distributed in $D_0$.  If $v^+$ is the velocity after refraction, the relation  $\sin\varphi = r_0^{-1}\sin\theta$ implies that the orthogonal projection of $v^+/|v^+|$ to $T_x\mathcal{C}$ has the uniform distribution on $D$,
hence satisfies the cosine law.
\end{proof}

 \subsection{Sojourn mean values}\label{Ap_C}
 We begin by recalling notation used in Section \ref{switching}, with the difference that here our billiard domains are $n$-dimensional.
Let ${\mathcal{A}}$ be either ${\mathcal{A}}_0$ or ${\mathcal{A}}_1$. Let  $V$ and  $A$ be the (Euclidean) $n$-dimensional volume of $\mathcal{A}$ and  $(n-1)$-dimensional volume  of the boundary of $\mathcal{A}$, and $a$ the $(n-1)$-dimensional  volume of the crossing boundary $\mathcal{C}:=\overline{\mathcal{A}}_0\cap \overline{\mathcal{A}}_1$.  
We denote by $\mathcal{V}$   the   space of pairs $(x,v)$ where $x\in \partial \overline{\mathcal{A}}$ and $v$ is a (velocity) tangent vector to $\overline{\mathcal{A}}$ at $x$ pointing into $\mathcal{A}$ having norm (speed) $\mathcal{s}.$
Let  $\mathcal{E}$ be  the space of pairs $(x,v)$ where now $x\in \mathcal{C}$.  On $\mathcal{E}$ we define the first return billiard map, $R(x,v)$. It is well-defined on a subset of full (Liouville) measure
 due to Poincar\'e recurrence and the assumption that $\mathcal{A}$ is bounded. 
 
 At each $(x,v)\in \mathcal{E}$, let  $T(x,v)$ and $N(x,v)$  denote, respectively,  the time of first return to $\mathcal{E}$ and the number of collisions  with the boundary of $\overline{\mathcal{A}}$
  of a billiard trajectory with initial state $(x,v)$
  before returning $\mathcal{E}$. For each $(x,v)\in \mathcal{V}$, let $\tau(x,v)$ denote the time duration of free flight from $(x,v)$  to the point of next collision. This is
 naturally the length of the free flight divided by the speed $\mathcal{s}$. Finally, denoting by $\nu$ and $\nu_\mathcal{E}$ the normalized Liouville measure on $\mathcal{V}$ and
 $\mathcal{E}$, respectively, we introduce the mean values
$$
 \langle N\rangle_{\mathcal{E}} :=\int_{\mathcal{E}} N(x,v)\, d\nu(x,v), \ \  
 \langle T\rangle_{\mathcal{E}}:=\int_{\mathcal{E}} T(x,v)\, d\nu(x,v), \ \  
 \langle \tau\rangle_{\mathcal{V}} :=\int_{\mathcal{V}} \tau(x,v)\, d\nu(x,v).
$$

\begin{figure}[htbp]
\begin{center}
\includegraphics[width=2.5in]{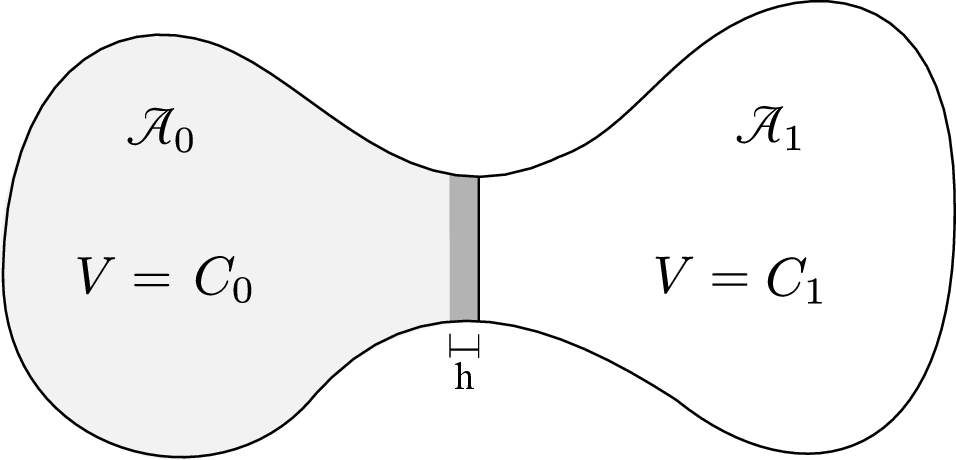}\ \ 
\caption{{\small Set-up for the switching dynamics description of lensed billiards. The collar region of width $h$ around $\overline{\mathcal{A}}_0\cap\overline{\mathcal{A}}_1$ contained in $\mathcal{A}_0$, is  used in the proof of Theorem \ref{averages}.
 }}
\label{two_chambers}
\end{center}
\end{figure}

\begin{theorem}\label{averages} With the notations just introduced and 
under the assumption that the standard billiard map in $\overline{\mathcal{A}}$ is ergodic, the following relations hold:
\begin{enumerate}
\item $\langle N\rangle_{\mathcal{E}} = \frac{A}a$;
\item $\langle T\rangle_{\mathcal{E}}= \sqrt{\pi} n\frac{\Gamma\left(\frac{n}{2}+\frac12\right)}{\Gamma\left(\frac{n}{2}+1\right)} \frac{V}{a\mathcal{s}}$;
\item $\langle \tau\rangle_{\mathcal{V}}= \sqrt{\pi} n\frac{\Gamma\left(\frac{n}{2}+\frac12\right)}{\Gamma\left(\frac{n}{2}+1\right)} \frac{V}{A\mathcal{s}};$
\item $\langle T\rangle_{\mathcal{E}}=\langle N\rangle_{\mathcal{E}}\langle \tau\rangle_{\mathcal{V}}.$
\end{enumerate}
Furthermore, if $\mathcal{E}_0\subseteq \mathcal{E}$ consists of pairs $(x,v)$ such that $x\in \mathcal{C}$ and the angle $\theta$ which $v$ makes with the normal $\mathbf{n}_x$ satisfies
$|\sin\theta|<r_0$ then the mean  number of returns to $\mathcal{E}$ before a first return to $\mathcal{E}_0$ is
$1/r^{n-1}_0$.
\end{theorem}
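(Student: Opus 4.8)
The plan is to derive each of the four identities from the ergodic theorem applied to the first-return billiard map $R$ on $\mathcal{E}$ together with Kac's formula, using the explicit form of the Liouville measure on $\mathcal{V}$ and on $\mathcal{E}$ (the cosine law for directions and uniform measure in the base variable, since the speed $\mathcal{s}$ is fixed inside $\overline{\mathcal{A}}$). First I would fix the normalizing constants: $\nu$ is normalized Liouville on $\mathcal{V}$ with $d\nu(x,v)\propto \cos\theta\, d\sigma_x(v)\,d\mu(x)$ over the full boundary of $\overline{\mathcal{A}}$ (total boundary measure $A$), while $\nu_{\mathcal{E}}$ is the same density restricted and renormalized over $\mathcal{C}$ (base measure $a$). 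For item (1), I would observe that $N(x,v)$ is the return time of the induced map $R$ relative to the full billiard map $\mathcal{T}$ restricted to the boundary of $\overline{\mathcal{A}}$: the orbit of $\mathcal{T}$ visits $\mathcal{E}$ with asymptotic frequency equal to the $\nu$-measure of $\mathcal{E}$, which by the product structure of the density is $a/A$; Kac's formula then gives $\langle N\rangle_{\mathcal{E}} = 1/(a/A) = A/a$. Ergodicity of the standard billiard map in $\overline{\mathcal{A}}$ is exactly what licenses the application of Kac.

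For item (3), $\langle\tau\rangle_{\mathcal{V}}$ is the mean free path for the standard billiard in $\overline{\mathcal{A}}$. This is the classical mean-free-path identity: $\langle\ell\rangle_{\mathcal{V}} = c_n\, V/A$ where $\ell = \mathcal{s}\,\tau$ is the Euclidean free-flight length and $c_n$ is the ratio of the volume of the $n$-ball to the volume of its equatorial $(n-1)$-section weighted appropriately — concretely $c_n = \sqrt{\pi}\,n\,\Gamma\!\left(\tfrac n2+\tfrac12\right)/\Gamma\!\left(\tfrac n2+1\right)$. I would prove this by integrating the free path length against the cosine-weighted measure over the boundary and recognizing the integral of $\ell\cos\theta$ as (up to constant) the volume swept, i.e. $n\,\mathrm{vol}(\mathcal{A})$; dividing by $\mathcal{s}$ converts length to time. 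Item (2) follows by the same swept-volume computation but with the base integral taken over $\mathcal{C}$ (measure $a$) instead of the full boundary (measure $A$): the total time spent inside $\overline{\mathcal{A}}$ per unit of flow time is $\mathrm{vol}(\mathcal{A})/(\text{total phase volume})$, and the first-return time to $\mathcal{E}$ is the flow time between successive visits to $\mathcal{C}$, so $\langle T\rangle_{\mathcal{E}}$ picks up the same geometric constant with $A$ replaced by $a$. Then item (4) is immediate by combining (1), (2), (3): $\langle N\rangle_{\mathcal{E}}\langle\tau\rangle_{\mathcal{V}} = (A/a)\cdot c_n V/(A\mathcal{s}) = c_n V/(a\mathcal{s}) = \langle T\rangle_{\mathcal{E}}$; alternatively (4) can be obtained abstractly from the tower/suspension structure (the return time to $\mathcal{E}$ equals the sum of the $N$ free-flight times along the sojourn, and the ergodic average of that sum factors by Abramov's formula).

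For the final claim about $\mathcal{E}_0$, I would again use Kac, now for the induced map of $R$ on the sub-base $\mathcal{E}_0\subseteq\mathcal{E}$: the mean number of returns to $\mathcal{E}$ before a first return to $\mathcal{E}_0$ is $1/\nu_{\mathcal{E}}(\mathcal{E}_0)$ (again invoking ergodicity of $R$, inherited from ergodicity of the standard billiard in $\overline{\mathcal{A}}$). It remains to compute $\nu_{\mathcal{E}}(\mathcal{E}_0)$. Under the cosine law, as recorded in the proof of Proposition \ref{crosscos}, parametrizing the unit hemisphere over $x\in\mathcal{C}$ by orthogonal projection to $T_x\mathcal{C}$ turns the cosine-weighted measure into Lebesgue measure on the unit $(n-1)$-disc $D$; the condition $|\sin\theta|<r_0$ cuts out the concentric disc $D_0$ of radius $r_0$, whose normalized Lebesgue measure is $r_0^{n-1}$. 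Hence $\nu_{\mathcal{E}}(\mathcal{E}_0) = r_0^{n-1}$ and the mean number of returns is $1/r_0^{n-1}$, as claimed.

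**Main obstacle.** The bookkeeping of normalization constants and the precise identification of the geometric constant $c_n$ in the mean-free-path formula is the most delicate part — one has to be careful that the measure on $\mathcal{V}$ used in the definition of $\langle\tau\rangle_{\mathcal{V}}$ is the boundary (Santaló) measure and that the ``swept volume'' argument is applied with the correct cosine weighting so that the Gamma-function ratio comes out right. The ergodic-theoretic skeleton (Kac, Abramov) is routine once ergodicity of the two standard subsystems is assumed; the analytic content is entirely in these geometric integral identities over $\overline{\mathcal{A}}$.
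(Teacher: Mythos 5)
Your proposal is correct, and it lands on the same constants via the same underlying computation (the hemisphere-to-disc projection under the cosine law, which produces the ratio $\mathrm{Vol}(S^+_x)/\mathrm{Vol}(D)$ and hence the Gamma factor), but its logical organization differs from the paper's. The paper proves (1) by running Birkhoff's theorem along an orbit (in effect re-deriving Kac's formula), proves (4) directly by writing the return time as the sum of the $N$ flight times between returns and averaging with $R$-invariance, and proves (2) by a collar argument: it introduces a band $U_h$ of width $h$ along $\mathcal{C}$, computes the traversal time $h/\langle\mathbf{n}_x,v\rangle$, and applies the flow ergodic theorem as $h\to 0$; item (3) is then deduced from the other three. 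You instead take (1) and (3) as primitive\----(3) via the classical Santal\'o swept-volume identity for the mean free path\----obtain (2) from a cross-section (flux/suspension) version of Kac's formula for the flow with base $\mathcal{C}$ in place of $\partial\overline{\mathcal{A}}$, and deduce (4); note that your alternative ``Abramov/tower'' derivation of (4) is essentially the paper's direct argument. Both routes invoke ergodicity in the same places and treat the final claim identically (Kac for the induced map together with $\nu_{\mathcal{E}}(\mathcal{E}_0)=r_0^{n-1}$ from the disc picture). One caveat: your one-line justification of (2), ``the same swept-volume computation with the base integral over $\mathcal{C}$,'' conflates the single free-flight time $\tau$ with the return time $T$; what actually carries that step is the flux formula (mean return time equals total flow-invariant measure divided by the flux through $\mathcal{E}$), or equivalently the paper's collar limit, so it should be written out explicitly. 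With that done, your version is, if anything, slightly more economical, since (3) enters as a standard geometric identity rather than as a by-product of (1), (2) and (4).
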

\begin{proof}
The following is a standard application of the ergodic theorem, which we nevertheless  present in detail. Recall that $\mathcal{T}$ denotes the billiard map on ${\mathcal{V}}$ and $R$ the first return map to $\mathcal{E}$.  Let us write $\xi=(x,v)$. Then
$$T(\xi) =\tau(\xi)+\tau(\mathcal{T}(\xi))+\cdots+\tau(\mathcal{T}^{N(\xi)-1}(\xi)). $$
For each $\xi\in\mathcal{E}$ and positive integer $\ell$, define
\begin{align*}
N^\ell(\xi)&:=N(\xi)+N(R(\xi))+\cdots+ N(R^\ell(\xi));\\
T^\ell(\xi)&:=T(\xi)+T(R(\xi))+\cdots+ T(R^\ell(\xi)).
\end{align*}
Then $N^\ell(\xi)$ is the total number of collisions with the boundary during the period of $\ell$ returns to the distinguished boundary part, and $T^\ell(\xi)$ is the total
time  elapsed during the same period. By the ergodic theorem, we have
$$\lim_{\ell\rightarrow \infty} N^\ell(\xi)/\ell =\langle N\rangle_{\mathcal{E}}, \ \ \lim_{\ell\rightarrow \infty}T^\ell(\xi) = \langle T\rangle_{\mathcal{E}}. $$
Therefore, for all $\xi\in {\mathcal{E}}$ but for a set of zero probability,
\begin{align*}
\langle N\rangle_{\mathcal{E}}^{-1}&=\lim_{\ell\rightarrow\infty}\frac{\ell}{N^\ell(\xi)}\\
&= \lim_{\ell\rightarrow \infty} \frac1{N^\ell(\xi)}\sum_{i=0}^{N^\ell(\xi)} \mathbbm{1}_\mathcal{E}(\mathcal{T}^i(\xi))\\
&=\text{Prob}({\mathcal{E}})\\
&=\frac{a}{A}.
\end{align*}
This shows the first identity. To obtain the fourth, start from
$$ \sum_{k=0}^{N^\ell(\xi)} \tau(\mathcal{T}^k(\xi))=T(\xi)+T(R(\xi))+\cdots+T(R^{\ell-1}(\xi))$$
and average both sides over ${\mathcal{E}}$, using $R$-invariance of the probability measure on ${\mathcal{E}}$ induced by the Liouville measure. This gives
$$\left\langle \sum_{k=0}^{N^\ell(\xi)}\tau(\mathcal{T}^k(\xi))\right\rangle_{\mathcal{E}} =\ell \langle T\rangle_{\mathcal{E}}.$$
Consequently,
$$\langle T\rangle_{\mathcal{E}} = \lim_{\ell\rightarrow \infty} \left\langle \left(\frac{N^\ell(\xi)}{\ell}\right)\left(\frac1{N^\ell(\xi)}\sum_{k=0}^{N^\ell(\xi)}\tau(\mathcal{T}^k(\xi))\right) \right\rangle_{\mathcal{E}}=\langle N\rangle_{\mathcal{E}}\langle \tau\rangle_{\mathcal{V}}.$$
To establish the second relation, let us first introduce the collar region $U_h$ of width $h$  separating $\mathcal{C}$ from the rest of $\mathcal{A}$, where $h$ is a small positive number.  This is shown schematically in Figure \ref{two_chambers}.
Except for a set of small measure, which goes to zero with $h$, the time it takes  for the trajectory with initial condition $\xi=(x,v)\in {\mathcal{E}}$ to traverse $U_h$ is
$\eta(\xi)=h/\langle \mathbf{n}_x, v\rangle$ plus terms of higher order in $h$ due to the possibly non-zero curvature of $\mathcal{C}$ at $x$. An explicit integral calculation  gives
$$ \lim_{h\rightarrow 0}\frac1h \langle \eta\rangle_{\mathcal{E}} =\frac{1}{\mathcal{s}}
\frac{\text{Vol}\left(S_x^+\right)}{\text{Vol}(D)}=\frac{n}{2\mathcal{s}}\frac{\Gamma\left(\frac{n}{2}+\frac12\right)}{\Gamma\left(\frac{n}{2}+1\right)}\sqrt{\pi},$$
where $S^+_x$ is the unit hemisphere in $T_x\overline{\mathcal{A}}$ and $D$ is the unit ball in $T_x\mathcal{C}$.
We can now conclude that, for all $\xi\in {\mathcal{E}}$ except for a set of zero probability,
\begin{align*}
\langle T\rangle_{\mathcal{E}}\frac{a}{V}&=\lim_{h\rightarrow 0}\lim_{m\rightarrow\infty}\left(\frac{T^m(\xi)}{m}\right)\left(\frac{1}{h T^m(\xi)}\int_0^{T^m(\xi)}\mathbbm{1}_{U_h}(\Phi_t(\xi))\, dt\right)\\
&=\lim_{h\rightarrow 0}\lim_{m\rightarrow\infty}\frac1{hm} \int_0^{T^m(\xi)} \mathbbm{1}_{U_h}(\Phi_t(\xi))\, dt\\
&=\lim_{h\rightarrow 0}\lim_{m\rightarrow\infty}\frac1{hm} \sum_{i=0}^{m-1}2\eta(R^i(\xi))\\
&=\lim_{h\rightarrow 0}\frac2h \langle \eta\rangle_{\mathcal{E}}\\
&=\frac{n}{\mathcal{s}}\frac{\Gamma\left(\frac{n}{2}+\frac12\right)}{\Gamma\left(\frac{n}{2}+1\right)}\sqrt{\pi}. 
\end{align*}
where we have used the notation $\Phi_t(\xi)$ for the billiard flow. This gives the second claimed identity. The third follows from the other three. The final claim is an immediate consequence of Kac's lemma,
noting that the ratio of the measures of $\mathcal{E}$ over that of $\mathcal{E}_0$ is the ratio of the volumes $D$ over that of $r_0D$, which is $1/r_0^{n-1}$.
\end{proof}

In the special case $n=2$, we have $$\langle T\rangle_{\mathcal{E}}=\frac{\pi}{\mathcal{s}}\frac{\text{area}(\mathcal{A})}{\text{length}(\mathcal{C})}.$$
 
\begin{corollary}\label{coro}
Suppose the value of the potential function in $\mathcal{A}_i$ is $C_i$ with $C_0>C_1$ and that the standard billiard system in $\overline{\mathcal{A}}_i$
is ergodic for $i=0$ and $1$.  Let $\langle T_i\rangle$ and $\langle N_i\rangle$ denote the mean time and number of collisions of the lensed billiard system during a sojourn 
in $\mathcal{A}_i$ before the next switch to the other region.
Then
$$\frac{\langle T_0\rangle}{\langle T_1\rangle}=\frac{V_0}{V_1} r_0^{n-2}, \ \ \frac{\langle N_0\rangle}{\langle N_1\rangle}=\frac{A_0}{A_1}r_0^{n-1}, $$
where $V_i$ and $A_i$ are the volume  of $\mathcal{A}_i$ and the volume of the boundary of $\mathcal{A}_i$.
\end{corollary}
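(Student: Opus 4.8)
The plan is to reduce everything to Theorem~\ref{averages} applied separately to the two standard billiards in $\overline{\mathcal{A}}_0$ and $\overline{\mathcal{A}}_1$, run with the respective speeds $\mathcal{s}_i:=\sqrt{2(E-C_i)/m}$, and then to account for the switching between them. Write $c_n:=\sqrt{\pi}\,n\,\frac{\Gamma\left(\frac{n}{2}+\frac12\right)}{\Gamma\left(\frac{n}{2}+1\right)}$, and note that $r_0=\sqrt{\frac{E-C_0}{E-C_1}}=\mathcal{s}_0/\mathcal{s}_1<1$ since $C_0>C_1$. The $(n-1)$-volume $a$ of the crossing hypersurface $\mathcal{C}$ is common to the two subsystems, which is ultimately why the ratios collapse.

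\emph{The $\mathcal{A}_0$ side.} Since $C_0>C_1$, a trajectory reaching $\mathcal{C}$ from inside $\mathcal{A}_0$ satisfies $(v_n^+)^2=(v_n^-)^2+\frac2m(C_0-C_1)>0$, so by Proposition~\ref{refl_refr} it always refracts: every arrival at $\mathcal{C}$ terminates the sojourn. Thus a sojourn in $\mathcal{A}_0$ is precisely one orbit segment of the standard billiard in $\overline{\mathcal{A}}_0$ between consecutive visits to the cross-section $\mathcal{E}$ over $\mathcal{C}$. By Theorem~\ref{invariance} the lensed billiard map preserves the Liouville measure, so the states initiating these sojourns are distributed by the normalized Liouville measure $\nu_{\mathcal{E}}$ on $\mathcal{E}$ — exactly the measure used in Theorem~\ref{averages}. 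Parts (1)--(2) of that theorem applied to $\overline{\mathcal{A}}_0$ with speed $\mathcal{s}_0$ therefore give $\langle N_0\rangle=\langle N\rangle_{\mathcal{E}}=A_0/a$ and $\langle T_0\rangle=\langle T\rangle_{\mathcal{E}}=c_n V_0/(a\,\mathcal{s}_0)$.

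\emph{The $\mathcal{A}_1$ side.} A sojourn in $\mathcal{A}_1$ is a concatenation of such first-return segments of the standard billiard in $\overline{\mathcal{A}}_1$: at each return to $\mathcal{C}$ the particle reflects (total internal reflection, the sojourn continues) unless its incidence angle satisfies $|\sin\theta|<r_0$, in which case it refracts out and the sojourn ends. Writing $R$ for the first-return map of $\overline{\mathcal{A}}_1$ to $\mathcal{E}$ and $\mathcal{E}_0:=\{(x,v)\in\mathcal{E}:|\sin\theta|<r_0\}$, a sojourn begun at $\xi$ is exactly the Kac excursion over $\mathcal{E}_0$: with $k(\xi)=\min\{k\ge1:R^k\xi\in\mathcal{E}_0\}$, its collision count and duration are $\sum_{j=0}^{k-1}N(R^j\xi)$ and $\sum_{j=0}^{k-1}T(R^j\xi)$. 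I would then supply two ingredients. First, that the entries into $\mathcal{A}_1$ are distributed by $\nu_{\mathcal{E}}(\cdot\mid\mathcal{E}_0)$: arguing as in Proposition~\ref{crosscos} but for the crossing $\mathcal{A}_0\to\mathcal{A}_1$, Snell's law $\sin\varphi=r_0\sin\theta$ transports the cosine law (the uniform law on the tangent unit ball $D$ of $\mathcal{C}$, in the coordinates of that proof) to the uniform law on $r_0 D$, which is $\nu_{\mathcal{E}}$ conditioned on $\mathcal{E}_0$ — the correct initial measure for the Kac excursion, valid here by Liouville-invariance. Second, the Kac tower identity $\int_{\mathcal{E}_0}\sum_{j=0}^{k(\xi)-1}f(R^j\xi)\,d\nu_{\mathcal{E}}=\int_{\mathcal{E}}f\,d\nu_{\mathcal{E}}$ (ergodicity of $\overline{\mathcal{A}}_1$), which after dividing by $\nu_{\mathcal{E}}(\mathcal{E}_0)$ gives $\langle N_1\rangle=\langle N\rangle_{\mathcal{E}}/\nu_{\mathcal{E}}(\mathcal{E}_0)$ and $\langle T_1\rangle=\langle T\rangle_{\mathcal{E}}/\nu_{\mathcal{E}}(\mathcal{E}_0)$, with $\nu_{\mathcal{E}}(\mathcal{E}_0)=\text{Vol}(r_0 D)/\text{Vol}(D)=r_0^{n-1}$ — the ratio already recorded in the last part of Theorem~\ref{averages}. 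Hence $\langle N_1\rangle=A_1/(a\,r_0^{n-1})$ and $\langle T_1\rangle=c_n V_1/(a\,\mathcal{s}_1 r_0^{n-1})$.

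\emph{Conclusion and main obstacle.} Taking ratios gives $\langle N_0\rangle/\langle N_1\rangle=(A_0/A_1)\,r_0^{n-1}$ at once, and $\langle T_0\rangle/\langle T_1\rangle=(V_0/V_1)\,r_0^{n-1}\,(\mathcal{s}_1/\mathcal{s}_0)=(V_0/V_1)\,r_0^{n-2}$, using $\mathcal{s}_1/\mathcal{s}_0=1/r_0$. The one genuinely delicate step is the $\mathcal{A}_1$ bookkeeping in the third paragraph: identifying a sojourn with a Kac excursion over $\mathcal{E}_0$ and certifying that its initial distribution is $\nu_{\mathcal{E}}(\cdot\mid\mathcal{E}_0)$ — that is, that the switching process is seen ``in equilibrium'' — which rests on combining Theorem~\ref{invariance} with the Snell change of variables, since only ergodicity of each separate subsystem (not of the whole lensed system) is assumed. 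The rest is substitution into Theorem~\ref{averages} and arithmetic in $r_0$.
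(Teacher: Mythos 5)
Your proof is correct and follows essentially the same route as the paper: apply Theorem \ref{averages} to each subsystem with its own speed $\mathcal{s}_i$, note that every arrival at $\mathcal{C}$ from $\mathcal{A}_0$ refracts while a sojourn in $\mathcal{A}_1$ comprises on average $1/r_0^{n-1}$ returns to $\mathcal{E}$ (the Kac factor from the last part of Theorem \ref{averages}), and take ratios using $\mathcal{s}_0/\mathcal{s}_1=r_0$. The extra bookkeeping you supply (the Kac tower identity and the Snell/Liouville argument that entries into $\mathcal{A}_1$ are distributed as $\nu_{\mathcal{E}}(\cdot\mid\mathcal{E}_0)$) only makes explicit steps the paper's terse proof leaves implicit.
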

\begin{proof}
A trajectory will transition from $\mathcal{A}_1$ to $\mathcal{A}_0$
whenever $(x,\theta)$ lies in the subset 
${\mathcal{E}}_0$ defined in Theorem \ref{averages}.
The particle speed in  $\mathcal{A}_i$ is $\mathcal{s}_i=\sqrt{2(E-C_i)/m}$, so
$\mathcal{s}_0/\mathcal{s}_1=r_0$.
The mean number of returns to $\mathcal{E}$ before the first return to $\mathcal{E}_0$ during a sojourn in $\mathcal{A}_1$ is $1/r_0^{n-1}$. Therefore
the claimed relations are a consequence of Theorem \ref{averages}.  
\end{proof}

 \subsection{Lensed billiard map differential}\label{Ap_D}
We give here the  proof of Theorem \ref{billiard differential}.
Let $t=t(x_1,\theta_1 )$ be such that $Q=q+tv$. Then
\begin{equation}\label{main_eq}\gamma_2(x_2) = \gamma_1(x_1) + t(x_1, \theta_1) e^{\theta_1 J}v_1. \end{equation}
Note that $\gamma_1'(x_1)=\mathbf{t}_1(\gamma_1(x_1))$ and  $\gamma_2'(x_2)=\mathbf{t}_2(\gamma_2(x_2)).$
Differentiating Equation (\ref{main_eq}) in $x_1$,
$$ \frac{\partial x_2}{\partial x_1}(0,0)\mathbf{t}_2(\mathcal{O}_2) = \mathbf{t}_1(\mathcal{O}_1) + \frac{\partial t}{\partial x_1}(0,0)v_1.$$
This implies
$$ \frac{\partial t}{\partial x_1}(0,0)=-{\mathbf{t}_1(\mathcal{O}_1)\cdot \mathbf{n}_2(\mathcal{O}_2)}/{v_1\cdot \mathbf{n}_2(\mathcal{O}_2)}$$
and 
\begin{align*}
 \frac{\partial x_2}{\partial x_1}(0,0) &=\mathbf{t}_1(\mathcal{O}_1)\cdot \mathbf{t}_2(\mathcal{O}_2)-\mathbf{t}_1(\mathcal{O}_1)\cdot\mathbf{n}_2(\mathcal{O}_2)\frac{v_1\cdot \mathbf{t}_2(\mathcal{O}_2)}{v_1\cdot \mathbf{n}_2(\mathcal{O}_2)}\\
 &=\frac{\left[\left(\mathbf{t}_2(\mathcal{O}_2)\wedge \mathbf{n}_2(\mathcal{O}_2)\right)\mathbf{t}_1(\mathcal{O}_1)\right]\cdot \nu_1}{\nu_1\cdot\mathbf{n}_2(\mathcal{O}_2)}\\
 &=\frac{\nu_1 \cdot \mathbf{n}_1(\mathcal{O}_1)}{\nu_1\cdot\mathbf{n}_2(\mathcal{O}_2)}.
 \end{align*}
 We have used the operation $(a\wedge b)c= (a\cdot c) b-(b\cdot c) a$ for vectors $a, b, c\in \mathbb{R}^n$. If $n=2$ and $(a, b)$ is a positive orthonormal basis of $\mathbb{R}^2$ then $J=a\wedge b$ is rotation counterclockwise by $\pi/2$. 
 Taking now the derivative in $\theta_1$ of both sides of Equation (\ref{main_eq}),
 $$\frac{\partial x_2}{\partial \theta_1}(0,0)\mathbf{t}_2(\mathcal{O}_2)= \frac{\partial t}{\partial \theta_1}(0,0)v_1 + t(0,0) Jv_1.$$
 Noting that $t(0,0)=\ell/|v_1|$, we obtain
 $$ \frac{\partial t}{\partial \theta_1}(0,0)=-\frac{\ell}{|v_1|} \frac{(J v_1)\cdot  \mathbf{n}_2(\mathcal{O_2})}{v_1\cdot\mathbf{n}_2(\mathcal{O_2})}    =
 -\frac{\ell}{|v_1|}
 \frac{\nu_1\cdot\mathbf{t}_2(\mathcal{O}_2)}{\nu_1\cdot \mathbf{n}_2(\mathcal{O}_2)}$$
 and 
 \begin{align*}
 \frac{\partial x_2}{\partial \theta_1}(0,0)&= -\frac{\ell}{|v_1|}
 \frac{\nu_1\cdot\mathbf{t}_2(\mathcal{O}_2)}{\nu_1\cdot \mathbf{n}_2(\mathcal{O}_2)} v_1\cdot \mathbf{t}_2(\mathcal{O}_2)
 +\frac{\ell}{|v_1|}(Jv_1)\cdot \mathbf{t}_2(\mathcal{O}_2)\\
 &=-\ell \frac{\nu_1\cdot \mathbf{t}_2(\mathcal{O}_2)}{\nu_1\cdot\mathbf{n}_2(\mathcal{O}_2)}\nu_1\cdot \mathbf{t}_2(\mathcal{O}_2)-\ell \nu_1\cdot \mathbf{n}_2(\mathcal{O}_2)\\
 &=-\frac{\ell}{\nu_1\cdot \mathbf{n}_2(\mathcal{O}_2)}\left[(\nu_1\cdot \mathbf{t}_2(\mathcal{O}_2))^2+(\nu_1\cdot \mathbf{n}_2(\mathcal{O}_2))^2\right]\\
 &=-\frac{\ell}{\nu_1\cdot \mathbf{n}_2(\mathcal{O}_2)}.
 \end{align*}
  We have obtained so far the first row of the differential of $\mathcal{T}$ for both reflection and refraction. For the second row,   the two cases must be treated separately. Let us first consider reflection. Then
  $V=v-2v\cdot \mathbf{n}_2(Q) \mathbf{n}_2(Q)$ which,  in the  $x_i, \theta_i$ coordinates, is
  \begin{equation}\label{main_angle}
e^{\theta_2 J}v_2 = e^{\theta_1 J}v_1 - 2\left(e^{\theta_1J}v_1\right)\cdot \mathbf{n}_2(Q(x_1,\theta_1)) \mathbf{n}_2(Q(x_1,\theta_1)), 
  \end{equation}
  where, by Equation (\ref{main_eq}), $\gamma_2(x_2)=Q(x_1,\theta_1)=\gamma_1(x_1)+t(x_1,\theta_1) e^{\theta_1J}v_1$.
  Differentiating Equation (\ref{main_angle}) in $x_1$ at $x_1=0, \theta_1=0$, yields
 \begin{equation}\label{eq3} \frac{\partial \theta_2}{\partial x_1}(0,0) Jv_2 = -2v_1\cdot \left(\frac{D\mathbf{n}_2}{\partial x_1}(\mathcal{O}_2)\right)\mathbf{n}_2(\mathcal{O}_2)- 2v_1\cdot \mathbf{n}_2(\mathcal{O_2}) \frac{D\mathbf{n}_2}{\partial x_1}(\mathcal{O}_2). \end{equation}
  Now $$\frac{D\mathbf{n}_2}{\partial x_1}(\mathcal{O}_2)=\frac{\partial x_2}{\partial x_1}(0,0)(D_{\mathbf{t}_2}\mathbf{n}_2)(\mathcal{O}_2)= 
  -\frac{\partial x_2}{\partial x_1}(0,0)\kappa(\mathcal{O}_2)\mathbf{t}_2(\mathcal{O}_2).$$
  Taking the dot product of Equation (\ref{eq3}) with $\mathbf{t}_2(\mathcal{O}_2)$, solving for $\frac{\partial \theta_2}{\partial x_1}(0,0)$, and substituting  the already obtained
  value of $\frac{\partial x_2}{\partial x_1}(0,0)$,  gives
  $$\frac{\partial \theta_2}{\partial x_1}(0,0)= -2 \kappa(\mathcal{O}_2)\frac{v_1\cdot\mathbf{n}_2(\mathcal{O}_2)}{v_2\cdot\mathbf{n}_2(\mathcal{O}_2)}\frac{\nu_1\cdot\mathbf{n}_1(\mathcal{O}_1)}{\nu_1\cdot \mathbf{n}_2(\mathcal{O}_2)}=-2 \kappa(\mathcal{O}_2)\frac{\nu_1\cdot\mathbf{n}_1(\mathcal{O}_1)}{\nu_2\cdot\mathbf{n}_2(\mathcal{O}_2)}.$$
  
  Next, we differentiate Equation (\ref{main_angle}) in $\theta_1$ at $x_1=0, \theta_1=0$:
  \begin{align}\label{eq4}
  \begin{split}
  \frac{\partial\theta_2}{\partial \theta_1}(0,0)Jv_2&= Jv_1 -2(Jv_1)\cdot \mathbf{n}_2(\mathcal{O}_2)  \mathbf{n}_2(\mathcal{O}_2) \\
   &\ \ \ \ \ \ \ \ \ \ \ \ \ 
  -2 v_1\cdot \frac{D\mathbf{n}_2}{\partial \theta_1}(\mathcal{O}_2)\mathbf{n}_2(\mathcal{O}_2)- 2v_1\cdot\mathbf{n}_2(\mathcal{O}_2) \frac{D\mathbf{n}_2}{\partial \theta_1}(\mathcal{O}_2).
  \end{split}
  \end{align}
  Note that
  $$\frac{D\mathbf{n}_2}{\partial \theta_1}(\mathcal{O}_2)= \frac{\partial x_2}{\partial \theta_1}(0,0)\left(D_{\mathbf{t}_2}\mathbf{n}_2\right)(\mathcal{O}_2)=
  -\frac{\partial x_2}{\partial \theta_1}(0,0)\kappa(\mathcal{O}_2)\mathbf{t}_2(\mathcal{O}_2).$$
  Taking the inner product of Equation \ref{eq4} with $\mathbf{t}_2(\mathcal{O}_2)$, substituting the already obtained $\frac{\partial x_2}{\partial \theta_1}(0,0)$,
  and solving for $ \frac{\partial\theta_2}{\partial \theta_1}(0,0)$, results in 
  $$\frac{\partial \theta_2}{\partial\theta_1}(0,0)=\frac{\nu_1\cdot \mathbf{n}_2(\mathcal{O}_2)}{\nu_2\cdot\mathbf{n}_2(\mathcal{O}_2)} + \frac{2\kappa(\mathcal{O}_2)\ell}{\nu_2\cdot \mathbf{n}_2(\mathcal{O}_2)}.$$
  This gives the differential when $\mathcal{T}$ produces a reflection. We now turn to the case of refraction, for which
  \begin{equation}\label{eq5}
  V=v\cdot\mathbf{t}_2(Q)\mathbf{t}_2(Q)+\left[\left(v\cdot \mathbf{n}_2(Q)\right)^2-\frac{2(C_2-C_1)}{m}\right]^{\frac12}\mathbf{n}_2(Q).
  \end{equation}
Using Equations (\ref{qQvV}) and differentiating Equation (\ref{eq5}) in $x_1$,
\begin{align}\label{long_eq}
\begin{split}
\frac{\partial \theta_2}{\partial x_1}(0,0) Jv_2 &= v_1\cdot \frac{D\mathbf{t}_2}{\partial x_1}(\mathcal{O}_2)\mathbf{t}_2(\mathcal{O}_2) + v_1\cdot \mathbf{t}_2(\mathcal{O}_2) \frac{D\mathbf{t}_2}{\partial x_1}(\mathcal{O}_2) \\
& \ \ \ \ \ \ \ \ \ +v_1\cdot\mathbf{n}_2(\mathcal{O}_2) \frac{v_1\cdot \frac{D\mathbf{n}_2}{\partial x_1}(\mathcal{O}_2)}{\left[({v}_1\cdot\mathbf{n}_2(\mathcal{O}_2)^2 -\frac{2(C_2-C_1)}{m})\right]^\frac{1}{2}}\mathbf{n}_2(\mathcal{O}_2)\\
& \ \ \ \ \ \ \ \ \  \ \ \ \ \ \ \ \ \  +\left[({v}_1\cdot\mathbf{n}_2(\mathcal{O}_2)^2 -\frac{2(C_2-C_1)}{m})\right]^\frac{1}{2}\frac{D\mathbf{n}_2}{\partial x_1}(\mathcal{O}_2).
\end{split}
\end{align}
 Noting that 
 $$\frac{D\mathbf{t}_2}{\partial x_1}(\mathcal{O}_2)= \kappa(\mathcal{O}_2) \mathbf{n}_2(\mathcal{O}_2) \frac{\partial x_2}{\partial x_1}(0,0), \ \   
 \frac{D\mathbf{n}_2}{\partial x_1}(\mathcal{O}_2)= -\kappa(\mathcal{O}_2) \mathbf{t}_2(\mathcal{O}_2) \frac{\partial x_2}{\partial x_1}(0,0),$$ 
 inserting the already obtained value for $\frac{\partial x_2}{\partial x_1}(0,0)$,  taking the inner product of Equation (\ref{long_eq}) with $\mathbf{t}_2(\mathcal{O}_2)$, and isolating
 $\frac{\partial \theta_2}{\partial x_1}(0,0)$ yields, after algebraic simplification,
 $$\frac{\partial \theta_2}{\partial x_1}(0,0)= -\kappa(\mathcal{O}_2)\sqrt{\frac{E-C_1}{E-C_2}}\frac{\nu_1\cdot \mathbf{n}_1(\mathcal{O}_1)}{\nu_2\cdot \mathbf{n}_2(\mathcal{O}_2)}
\alpha,$$
where 
$$ \alpha := 1 - \left[1 - \frac{C_2-C_1}{E-C_1}\frac1{(\nu_1\cdot \mathbf{n}_2(\mathcal{O}_2))^2}\right]^\frac{1}{2}.$$
  Finally, differentiating Equation (\ref{eq5}) in $\theta_1$ at $x_1=0, \theta_1=0$, taking the inner product with $\mathbf{t}_2(\mathcal{O}_2)$, using the already obtained partial derivative of $x_2$ with respect to $\theta_1$, replacing the derivatives of $\mathbf{t}_2$ and  $\mathbf{n}_2$ by expressions involving $\kappa$, and finally isolating $\frac{\partial \theta_2}{\partial \theta_1}(0,0)$ yields
 $$\frac{\partial \theta_2}{\partial \theta_1}(0,0) = \sqrt{\frac{E-C_1}{E-C_2}}\frac{\nu_1\cdot \mathbf{n}_2(\mathcal{O}_2)}{\nu_2\cdot \mathbf{n}_2(\mathcal{O}_2)}\left(1+\frac{\kappa(\mathcal{O}_2) \ell}{\nu_1\cdot\mathbf{n}_2(\mathcal{O}_2)}\alpha\right). $$
This is the last term of the differential of $\mathcal{T}$ that was  left to compute.

 \end{document}